\newif\iflong
\newif\ifshort
\newtheorem{theorem}{Theorem}
\newtheorem{lemma}[theorem]{Lemma}
\newtheorem{corollary}[theorem]{Corollary}
\newtheorem{proposition}[theorem]{Proposition}
\newtheorem{observation}{Observation}
\theoremstyle{definition}
\newcommand{\cc}[1]{{\mbox{\textnormal{\textsf{#1}}}}\xspace}  
\newcommand{\NP}{\cc{NP}}
\newcommand{\FPT}{\cc{FPT}}
\newcommand{\XP}{\cc{XP}}
\newcommand{\Weft}{{\cc{W}}}
\newcommand{\W}[1]{{\Weft}{\normalfont{[#1]}}}
\newcommand{\tw}{\operatorname{tw}}
\newcommand{\Nat}{\mathbb{N}}
\newcommand{\bigoh}{\mathcal{O}}
\newcommand{\Schelling}{\textsc{Schelling}\xspace}
\newcommand{\SchellingM}{\textsc{SchellingM}\xspace}
\newcommand{\clique}{\textsc{Clique}\xspace}
\newcommand{\biclique}{\textsc{BiClique}\xspace}
\newcommand{\minbisection}{\textsc{MinBisection}\xspace}
\newcommand{\UBP}{\textsc{UnBinPacking}\xspace}
\newcommand{\vbf}{\ensuremath{\mathbf{v}}\xspace}
\newcommand{\xbf}{\ensuremath{\mathbf{x}}\xspace}
\newcommand{\ybf}{\ensuremath{\mathbf{y}}\xspace}
\newcommand{\ubf}{\ensuremath{\mathbf{u}}\xspace}
\newcommand{\sw}{\ensuremath{\text{SW}}\xspace}
\newcommand{\WO}{\ensuremath{\text{WO}}\xspace}
\newcommand{\PO}{\ensuremath{\text{PO}}\xspace}
\newcommand{\UVO}{\ensuremath{\text{UVO}}\xspace}
\newcommand{\GWO}{\ensuremath{\text{GWO}}\xspace}
\newcommand{\perfect}{\ensuremath{\text{Perfect}}\xspace}
\newcommand{\tuple}[1]{\ensuremath{\langle {#1} \rangle}\xspace}
\newcommand{\Sizes}{\operatorname{Sizes}}
\newcommand{\Types}{\chi\operatorname{-Types}}
\newcommand{\NeighborhoodSizes}{\chi\operatorname{-Neighbors}}
\title{The Parameterized Complexity of Welfare Guarantees in Schelling Segregation}
\author{ 
Argyrios Deligkas$^1$ \and
Eduard Eiben$^1$\and
Tiger-Lily Goldsmith$^{1}$\\
\affiliations
$^1$Royal Holloway, University of London\\
\emails
\{Argyrios.Deligkas, Eduard.Eiben\}@rhul.ac.uk,
 tigerlilygoldsmith@gmail.com
}
\begin{document}

\maketitle

\begin{abstract}
Schelling's model considers $k$ types of agents each of whom needs to select a vertex on an undirected graph, where every agent prefers to neighbor agents of the same type. 
We are motivated by a recent line of work that studies solutions that are optimal with respect to notions related to the welfare of the agents.
We explore the parameterized complexity of computing such solutions.
We focus on the well-studied notions of social welfare (\WO) and Pareto optimality (\PO), alongside the recently proposed notions of group-welfare optimality (\GWO) and utility-vector optimality (\UVO), both of which lie between \WO and \PO.
Firstly, we focus on the fundamental case where $k=2$ and there are $r$ red agents and $b$ blue agents. We show that all solution-notions we consider are \NP-hard to compute even when $b=1$ and that they are \W{1}-hard when parameterized by $r$ and $b$. 
In addition, we show that \WO and \GWO are \NP-hard even on cubic graphs. 
We complement these negative results by an \FPT algorithm parameterized by $r, b$ and the maximum degree of the graph. 
For the general case with $k$ types of agents, we prove that for any of the notions we consider the problem is $\W{1}$-hard when parameterized by $k$ \iflong, and thus \NP-hard when $k$ is a part of the input, \fi
for a large family of graphs that includes trees. We accompany these negative results with an \XP algorithm parameterized by $k$ and the treewidth of the graph.
\end{abstract}

\section{Introduction}
\label{sec:intro}
Residential segregation is a phenomenon that is observed in many residential areas around the globe.
As a result of de-facto segregation, people group together forming communities based on traits such as race, ethnicity, socioeconomic status and residential areas become noticeably divided into segregated neighborhoods. Half a century ago, \cite{schelling1969models} proposed a simple agent-based model to address residential segregation and study how segregation emerges from individuals' perceptions. 

At a high level, Schelling's model works as follows. There are two types of agents, say red and blue, each of whom is placed on a unique node on a graph. Agents are aware of their neighborhood; agents of the same type are considered ``friends'' and those of opposite type ``enemies''. An agent is happy with their location if and only if the fraction of friends in their neighborhood is at least $\tau$, where $\tau \in [0,1]$ is a tolerance parameter. Schelling proposed a random process that starts from a random initial assignment and agents who are unhappy in their current neighborhood relocate to a different, random, empty node, whilst happy agents stay put. It is expected that when agents are not tolerant towards a diverse neighborhood, $\tau>\frac{1}{2}$, these dynamics will converge to a segregated assignment. However, Schelling's experimentation on grid graphs showed that even when agents are in favour of integration, i.e. $\tau \approx \frac{1}{3}$, the  final assignment will be segregated.

Since Schelling's model was proposed, his work has been the subject of many empirical studies in sociology~\cite{clark2008understanding}, in economics~\cite{zhang2004dynamic,zhang2004residential}, and more recently in computer science. For example~\cite{DBLP:journals/corr/BarmpaliasEL15} and~\cite{immorlica2017exponential}, analyze Schelling's model on a grid graph with its original random dynamics, as well as many variants of this random process. They show that assignments converge to large monochromatic subgraphs with a high probability, confirming Schelling's research. Even more recently, \cite{BSV21} studied assignments with certain {\em welfare guarantees} for the agents and the computational complexity of computing them. These guarantees are the focus of this paper, albeit under the prism of Parameterized Complexity.

In parameterized
algorithmics~\iflong\cite{CyganFKLMPPS15,DowneyFellows13,Niedermeier06}\fi\ifshort\cite{CyganFKLMPPS15}\fi{} the
running-time of an algorithm is studied with respect to a parameter
$k\in\Nat_0$ and input size~$n$. The basic idea is to find a parameter
that describes the structure of the instance such that the
combinatorial explosion can be confined to this parameter. In this
respect, the most favorable complexity class is \FPT
(\textit{fixed-parameter tractable}), which contains all problems that
can be decided by an algorithm running in time $f(k)\cdot
n^{\bigoh(1)}$, where $f$ is a computable function. Algorithms with
this running-time are called \emph{fixed-parameter (FPT) algorithms}. A less
favorable, but still positive, outcome is an \XP{} \emph{algorithm}, which is an algorithm
running in time $\bigoh(n^{f(k)})$; problems admitting such
algorithms belong to the class \XP. Finally, showing that a problem is $\W{1}$-hard rules out the existence of a fixed-parameter algorithm under the well-established assumption that $\W{1}\neq \FPT$.

\subsection{Our Contributions}
We explore the parameterized complexity of computing assignments for Schelling's model that optimize some welfare guarantee. We study four solution notions: social-welfare optimality (\WO), Pareto optimality (\PO), group-welfare optimality (\GWO) and utility-vector optimality (\UVO). We denote the problem as $\phi$-\Schelling, where $\phi \in \{ \WO, \PO, \GWO, \UVO\}$, and the task is to find an assignment that satisfies notion $\phi$ for a given Schelling instance.
While \WO and \PO are well-studied notions in various domains, the solution concepts of \GWO and \UVO were proposed by~\cite{BSV21}. There it was proven that both \UVO and \GWO lie between \WO and \PO. At a high level, an assignment is \GWO if we cannot increase the total utility of one type of agents without decreasing the utility of the other type; an assignment is \UVO if it is not possible to improve the sorted utility vector of the agents. While~\cite{BSV21} showed that all four notions are \NP-hard to compute in general, their parameterized complexity remained open.

We firstly focus on the fundamental case where we have two types of agents: $r$ red agents and $b$ blue agents. In Theorem~\ref{thm:b-np} we show that $\phi$-\Schelling is \NP-hard  even when $b=1$, for every $\phi \in \{\WO, \PO, \GWO, \UVO \}$. In Theorem~\ref{thm:rb-w-h} we extend this negative result and we show that deciding if there exists a {\em perfect} assignment, i.e. an assignment where every agent has only friends as neighbors, is \W{1}-hard when parameterized by $r+b$. This implies Corollary~\ref{cor:rb-wh}: $\phi$-\Schelling is \W{1}-hard when parameterized by $r+b$, for every $\phi \in \{\WO, \PO, \GWO, \UVO \}$. Hence, if we want to derive a positive result, we {\em need} to restrict the topology of the graph. In Theorem~\ref{thm:bounded-NP-h} we show that restricting the maximum degree of the graph does not always suffice; we prove that both \WO-\Schelling and \GWO-\Schelling are \NP-hard even on cubic graphs. 
We complement these negative results by Theorem~\ref{thm:fpt-rbD}; we show that $\phi$-\Schelling is in \FPT, for all four optimality notions, parameterized by $r+b+\Delta$, where $\Delta$ denotes the maximum degree of the graph. In fact, we show that $\phi$-\Schelling admits a polynomial time preprocessing algorithm, called \emph{kernel}, that yields an instance with at most \(\mathcal{O}(\Delta^2\cdot r^2\cdot b^2)\) many vertices.

Then, we turn our attention to the general case where there are {\em multiple} types of agents, which we denote \SchellingM. In Theorem~\ref{thm:wh-k-types} we prove that finding a perfect assignment is $\W{1}$-hard when parameterized by the types of agents, $k$, for a large family of graphs that includes trees. Therefore, it is \NP-hard when $k$ is part of the input and not bounded by a function of the parameter for the same family of graphs. Again, we get the corresponding \NP-hardness and $\W{1}$-hardness for $\phi$-\SchellingM as corollaries, for every $\phi \in \{\WO, \PO, \GWO, \UVO \}$.
We complement this with three positive results. In Theorem~\ref{thm:tw_algorithm} we derive an \XP algorithm parameterized by the number of types and the treewidth of the graph. By using the same algorithm, we get Corollary~\ref{cor:fpt-tw-n} that shows an FPT algorithm for $\phi$-\SchellingM parameterized by the number of agents plus the treewidth of the graph. Finally, by slightly modifying this algorithm, we get Corollary~\ref{cor:fpt-perfect} that shows that if the number of types is any fixed constant, then the problem of finding a perfect assignment, if one exists, admits an FPT algorithm parameterized by treewidth. 

\subsection{Further Related Work}
A different line of work studies Schelling {\em games}, a strategic setting of Schelling's model. There, unhappy agents will move to a different position that maximises the fraction of friends in the neighborhood. Here the focus is shifted to the existence of Nash equilibria, i.e., assignments where no agent has incentives to change their position.
In~\cite{ElkindGISV19} they consider {\em jump} Schelling games with $k \geq 2$ types, with agents that can deviate to empty nodes in the graph and {\em stubborn} agents which do not move regardless of their utility. They proved \NP-hardness for computing a Nash equilibrium and for \WO. In~\cite{AEGV20} {\em swap} Schelling games were studied, where agents of different types exchange their positions if at least one of them strictly increase their utility. Again, they showed that deciding whether a Nash equilibrium exists, is \NP-hard. Furthermore, in order to measure the diversity in assignments, they introduced the {\em degree of integration} that counts the number of agents exposed to agents not of their type. They showed that computing assignments that maximize this measure is hard.

\cite{DBLP:conf/mfcs/BiloBLM20} investigate the existence of equilibria via finite improvement paths on different graph classes for swap Schelling games, and study a local variant wherein agents can only swap with agents in their neighborhood. 

\cite{kanellopoulos2020modified} study price of anarchy and price of stability in {\em modified} Schelling games, where the agent includes herself as part of the neighborhood\iflong (effectively adding $+1$ to the denominator of the standard utility function). Hence, this captures an agents desire to be around a larger neighborhood of friends\fi. They prove tight bounds on the price of anarchy for general and some specific graphs with $k\geq2$ and $k=1$.
  
 Furthermore, there are other extensions and variations of Schelling games~\cite{KKVstrangers,echzell2019convergence,chauhan2018schelling}.

\ifshort
\smallskip

\noindent {\emph{Statements where proofs or details are omitted due to
space constraints are marked with $\star$. A version
containing all proofs and details is provided as supplementary material.}}
\fi

\section{Preliminaries}
\label{sec:prelims}
For every positive integer $n$, let $[n] = \{1, 2, \ldots, n\}$. Given two vectors $\xbf, \ybf$ of length $n$, we say that \xbf~{\em weakly dominates} \ybf if $\xbf(i) \geq \ybf(i)$ for every $i \in [n]$; \xbf~{\em strictly dominates} \ybf if at least one of the inequalities is strict.

A {\em Schelling instance} \tuple{G,A}, consists of a graph $G=(V,E)$ and a set of {\em agents} $A$, where $|A| \leq |V|$. Every agent has a {\em type}, or {\em color}. When there are only two colors available, we assume that $A = R \cup B$, where $R$ contains red agents and $B$ contains blue agents. We denote $r = |R|$ and $b = |B|$. Agents $i$ and $j$ are {\em friends}, if they have the same color; otherwise they are {\em enemies}. For any agent $i$ we use $F(i)$ to declare the set of his friends.

An {\em assignment} $\vbf = (v(1), \ldots, v(|A|))$ for the Schelling instance \tuple{G,A} maps every agent in $A$ to vertex $v \in V$, such that every vertex is occupied by at most one agent. Here, $v(i) \in V$ is the vertex of $G$ that agent $i$ occupies. For any assignment \vbf and any agent $i \in A$, $N_i(\vbf) = \{ j \in A: v(i)v(j) \in E\}$ denotes the set of neighbors of $v(i) \in V$ that are occupied under \vbf. Let $f_i(\vbf) = |N_i(\vbf) \cap F(i)|$ and let $e_i(\vbf) = |N_i(\vbf)| - f_i(\vbf)$ be respectively the numbers of neighbors of agent $i$ who are his friends and his enemies under \vbf. The {\em utility} of agent $i$ under assignment \vbf, denoted $u_i(\vbf)$, is 0 if $|N_i(\vbf)|=0$, and if $|N_i(\vbf)| \neq 0$ is defined as
\begin{align*}
    u_i(\vbf) = \frac{f_i(\vbf)}{|N_i(\vbf)|} = \frac{f_i(\vbf)}{f_i(\vbf) + e_i(\vbf)}.
\end{align*}
The {\em social welfare} of \vbf is the sum of the utilities of all agents, formally $\sw(\vbf) = \sum_{i \in A}u_i(\vbf)$. For $X \in \{R, B\}$ we denote $\sw_X(\vbf) = \sum_{i \in X}u_i(\vbf)$. 

We use $\ubf(\vbf)$ to denote the vector of length $|A|$ that contains the utilities of the agents under \vbf, sorted in non-increasing order. Similarly, let $\ubf_X(\vbf)$ denote the corresponding vector of utilities of the agents in $X \in \{R, B\}$. An assignment \vbf is  {\em utility-vector dominated} by $\vbf'$ if $\ubf(\vbf')$ strictly dominates $\ubf(\vbf)$; \vbf is {\em group-welfare dominated} by $\vbf'$ if $\sw_X(\vbf') \geq \sw_X(\vbf)$, where $X \in \{R,B\}$, and at least one of the inequalities is strict. An assignment \vbf is:
\begin{itemize}
    \item {\em welfare optimal}, denoted \WO, if for every other assignment $\vbf'$ we have $\sw(\vbf) \geq \sw(\vbf')$;
    \item {\em Pareto optimal}, denoted \PO, if and only if there is no $\vbf'$  such that $\ubf_X(\vbf')$ weakly dominates $\ubf_X(\vbf)$ for $X \in \{R, B\}$ and at least one of the dominations is strict;
    \item {\em utility-vector optimal}, denoted \UVO, if it is not utility-vector dominated by any other assignment;
    \item {\em group-welfare optimal}, denoted \GWO, if it is not group-welfare dominated by any other assignment;
    \item {\em perfect}, denoted \perfect, if every agent gets utility 1.
\end{itemize}
As previously mentioned, \UVO and \GWO were introduced in~\cite{BSV21} where the following proposition was proven.
\begin{proposition}
\label{pro:notions}
If an assignment \vbf is \WO, then it is \UVO, \GWO,~and~\PO. If \vbf is \UVO or \GWO, then it is \PO.
\end{proposition}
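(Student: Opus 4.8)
The plan is to prove the five implications (\WO implies each of \UVO, \GWO, \PO; and \UVO and \GWO each imply \PO) by contraposition, leaning on two elementary observations. First, sorting a list does not change its sum, so $\sum_i \ubf(\vbf)(i) = \sw(\vbf)$ and $\sum_i \ubf_X(\vbf)(i) = \sw_X(\vbf)$ for $X \in \{R,B\}$. Second, if a sorted vector \xbf weakly dominates a sorted vector \ybf of equal length, then $\sum_i \xbf(i) \geq \sum_i \ybf(i)$, and this inequality is strict whenever any single coordinate domination is strict.

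For the three \WO implications I would argue as follows. If $\vbf$ were group-welfare dominated by some $\vbf'$, summing $\sw_R(\vbf') \geq \sw_R(\vbf)$ and $\sw_B(\vbf') \geq \sw_B(\vbf)$ (one of them strict) yields $\sw(\vbf') > \sw(\vbf)$, contradicting welfare optimality; this gives \WO implies \GWO. If $\vbf$ were utility-vector dominated by $\vbf'$, then $\ubf(\vbf')$ strictly dominates $\ubf(\vbf)$, and by the first observation $\sw(\vbf') > \sw(\vbf)$, again contradicting \WO; this gives \WO implies \UVO. The implication \WO implies \PO follows by the same summation applied to a \PO-violating $\vbf'$.

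The crux is \UVO implies \PO, the only place where the per-color sorted vectors $\ubf_R, \ubf_B$ must be reconciled with the global sorted vector \ubf. Suppose $\vbf$ is not \PO, witnessed by $\vbf'$ with $\ubf_X(\vbf')$ weakly dominating $\ubf_X(\vbf)$ for both $X \in \{R,B\}$ and at least one domination strict. I would first establish a merging lemma: if two sorted vectors each weakly dominate their equal-length counterparts, then the sorted concatenation of the dominating pair weakly dominates the sorted concatenation of the dominated pair. The clean proof uses thresholds: for every real $t$ let $N_S(t)$ count the entries of a multiset $S$ that are at least $t$; for equal-length sorted vectors, coordinatewise weak domination is equivalent to the dominating vector having at least as many entries above every threshold $t$, and since $N_S(t)$ is additive over disjoint unions, the domination transfers to the concatenations. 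Hence $\ubf(\vbf')$ weakly dominates $\ubf(\vbf)$. For strictness, the strict per-color domination forces $\sw(\vbf') > \sw(\vbf)$ by the first observation, so the two sorted global vectors have unequal sums; combined with coordinatewise weak domination this forces at least one strictly larger coordinate, so $\ubf(\vbf')$ strictly dominates $\ubf(\vbf)$, contradicting \UVO.

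Finally, \GWO implies \PO is the simplest: the same \PO-witness $\vbf'$ satisfies $\sw_X(\vbf') \geq \sw_X(\vbf)$ for both colors (second observation applied per color) with at least one strict inequality, which is exactly group-welfare domination and contradicts \GWO. I expect the merging lemma for \UVO implies \PO to be the main obstacle, since every other step is a one-line summation; the threshold-counting formulation is what lets both the weak-domination transfer and the strictness bookkeeping go through without case analysis.
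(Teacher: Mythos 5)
Your proof is correct, but there is nothing in the paper to compare it against: the paper states Proposition~\ref{pro:notions} as a result proven in \cite{BSV21} and gives no proof of its own, so you have supplied a self-contained argument for an imported fact rather than diverged from the authors' reasoning. Judged on its own merits, every step checks out against the paper's definitions. The four summation-based implications are immediate once one notes that the paper's ``strictly dominates'' means weak coordinatewise domination with at least one strict coordinate, so sorting-invariance of the sum turns each domination hypothesis into $\sw(\vbf')>\sw(\vbf)$ (refuting \WO) or, summed per color, into exactly group-welfare domination (giving \GWO $\Rightarrow$ \PO). The one genuinely nontrivial step, \UVO $\Rightarrow$ \PO, you handle with the right tool: the threshold characterization---for equal-length non-increasing vectors, $\xbf$ weakly dominates $\ybf$ if and only if $N_{\xbf}(t)\ge N_{\ybf}(t)$ for every threshold $t$---is valid in both directions (forward because the entries $\ge t$ form a prefix of a sorted vector; backward by taking $t=\ybf(i)$ for each $i$), and additivity of $N_S(t)$ over multiset union then transfers weak domination from the per-color vectors $\ubf_R,\ubf_B$ to the merged sorted vector $\ubf$. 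Your strictness bookkeeping is also sound: weak coordinatewise domination of two equal-length vectors with strictly unequal sums forces at least one strict coordinate, which is precisely the paper's strict domination of $\ubf(\vbf')$ over $\ubf(\vbf)$, contradicting \UVO. Two minor remarks: \WO $\Rightarrow$ \PO is redundant once you have \WO $\Rightarrow$ \GWO and \GWO $\Rightarrow$ \PO (or \WO $\Rightarrow$ \UVO and \UVO $\Rightarrow$ \PO), and your merging lemma as stated extends verbatim to $k$ colors by induction, which would be needed for the multi-type setting of $\phi$-\SchellingM where the paper also invokes this proposition.
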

\begin{observation}
\label{obs:perfect}
If Schelling instance \(\tuple{G,A}\) admits a \perfect assignment, then every \PO assignment is \perfect. 
\end{observation}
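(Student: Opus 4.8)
The plan is to prove the statement by contradiction, exploiting the fact that every agent's utility is bounded above by $1$ and that a \perfect assignment attains this bound for all agents simultaneously. The \perfect assignment will serve as an explicit witness that Pareto-dominates any \PO assignment that fails to be \perfect.

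First I would fix a \perfect assignment $\vbf^\star$, which exists by hypothesis. By definition of \perfect, $u_i(\vbf^\star) = 1$ for every agent $i \in A$, so the sorted utility vectors $\ubf_R(\vbf^\star)$ and $\ubf_B(\vbf^\star)$ are the all-ones vectors of lengths $r$ and $b$ respectively. Recall that every assignment places all agents, so for any assignment these vectors always have lengths $r$ and $b$, matching across assignments. Now let $\vbf$ be an arbitrary \PO assignment and suppose, towards a contradiction, that $\vbf$ is not \perfect. Then some agent has utility strictly less than $1$; by the symmetry between the two groups assume without loss of generality that this agent belongs to $R$, so that at least one entry of $\ubf_R(\vbf)$ is strictly below $1$.

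The key step is to compare $\vbf^\star$ with $\vbf$ group by group. Since $u_i(\vbf) \le 1$ for every agent, every entry of $\ubf_R(\vbf)$ is at most $1$, whence the all-ones vector $\ubf_R(\vbf^\star)$ weakly dominates $\ubf_R(\vbf)$; moreover this domination is strict because at least one entry of $\ubf_R(\vbf)$ lies below $1$. The same bound gives that $\ubf_B(\vbf^\star)$ weakly dominates $\ubf_B(\vbf)$. Hence $\vbf^\star$ witnesses that $\vbf$ is not \PO, contradicting our choice of $\vbf$. Therefore every \PO assignment must be \perfect.

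The only subtlety, and the step I would be most careful to state correctly, is that the domination relation defining \PO is over the \emph{sorted} per-group utility vectors rather than agent-by-agent. This is harmless here precisely because the dominating vector is the all-ones vector: it weakly dominates any vector of the same length coordinatewise regardless of the sorting, since no coordinate can exceed $1$. With that observation in place the remaining arguments are immediate, so I do not anticipate any real obstacle beyond phrasing the domination precisely.
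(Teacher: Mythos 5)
Your proof is correct and is precisely the immediate argument the paper relies on (the paper states Observation~\ref{obs:perfect} without proof as self-evident): since utilities are capped at $1$, the all-ones sorted utility vectors of a \perfect assignment weakly dominate those of any assignment, strictly so if some agent's utility falls below $1$, hence no non-\perfect assignment can be \PO. Your attention to the sorted-vector subtlety and matching vector lengths is sound, and the same argument extends verbatim to $k$ types, which is how the observation is later invoked for \SchellingM.
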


\noindent
In this paper we study the complexity of $\phi$-\Schelling, where $\phi \in \{\WO, \PO, \GWO, \UVO, \perfect \}$. In other words, given a Schelling instance \tuple{G,A}, we study the problem of finding an assignment \vbf satisfying the given optimality notion.

\subsection{Parameterized Complexity}

\paragraph{\bf Parameterized Complexity.}
We refer to the handbook by Diestel~\shortcite{Diestel12} for
standard graph terminology. We also refer to the standard books for a basic overview of parameterized complexity theory~\iflong\cite{CyganFKLMPPS15,DowneyFellows13}\fi\ifshort\cite{CyganFKLMPPS15}\fi, and assume that readers are aware of the complexity classes \FPT, \XP\ and \W{1}. Readers interested in the full details of the proof of Theorem~\ref{thm:tw_algorithm} are also expected to have a basic understanding of \emph{treewidth} and \emph{nice tree-decompositions}~\cite{CyganFKLMPPS15}.
\iflong
We denote by $\Nat$ the set of natural numbers, by $\Nat_0$ the set $\Nat \cup \{0\}$. 
We refer to the handbook by Diestel~\shortcite{Diestel12} for
standard graph terminology. Let $K_{i,j}$ be the complete bipartite graph with parts of size $i$ and $j$.


A parameterized optimization problem is {\em kernelizable}
if there exists a polynomial-time  \emph{preprocessing algorithm} that maps an instance $(I, \kappa)$ of
the problem to another instance $(I', \kappa')$ such that 
$|I'| \leq f(\kappa)$ and $\kappa' \leq f(\kappa)$, where $f$ is a computable function called \emph{size of the kernel}, and 
a polynomial-time \emph{solution lifting algorithm} that takes as an input the instance $(I, \kappa)$, the output of the preprocessing algorithm $(I', \kappa')$, and a solution $s'$ for $(I', \kappa')$ and computes a solution $s$ for $(I, \kappa)$.
The instance
$(I',\kappa')$ is called the {\em kernel} of~$I$. It is well known that a
decidable problem is \FPT{} if and only if it is
kernelizable~\cite{DowneyFellows13}.
A \emph{polynomial kernel} is a kernel whose size can be bounded by a
polynomial in the parameter.

\smallskip
\noindent \textbf{Treewidth.}\quad
A \emph{nice tree-decomposition}~$\mathcal{T}$ of a graph $G=(V,E)$ is a pair 
$(T,\chi)$, where $T$ is a tree (whose vertices we call \emph{nodes}) rooted at a node $r$ and $\chi$ is a function that assigns each node $t$ a set $\chi(t) \subseteq V$ such that the following holds: 
\begin{itemize}[noitemsep]
	\item For every $uv \in E$ there is a node
	$t$ such that $u,v\in \chi(t)$.
	\item For every vertex $v \in V$,
	the set of nodes $t$ satisfying $v\in \chi(t)$ forms a subtree of~$T$.
	\item $|\chi(\ell)|=0$ for every leaf $\ell$ of $T$ and $|\chi(r)|=0$.
	\item There are only three kinds of non-leaf nodes in $T$:
	\begin{itemize}[noitemsep,label=]
        \item \textbf{Introduce node:} a node $t$ with exactly
          one child $t'$ such that $\chi(t)=\chi(t')\cup
          \{v\}$ for some vertex $v\not\in \chi(t')$.
        \item \textbf{Forget node:} a node $t$ with exactly
          one child $t'$ such that $\chi(t)=\chi(t')\setminus
          \{v\}$ for some vertex $v\in \chi(t')$.
        \item \textbf{Join node:} a node $t$ with two children $t_1$,
          $t_2$ such that $\chi(t)=\chi(t_1)=\chi(t_2)$.
	\end{itemize}
\end{itemize}

The \emph{width} of a nice tree-decomposition $(T,\chi)$ is the size of a largest set $\chi(t)$ minus~$1$, and the \emph{treewidth} of the graph $G$,
denoted $\tw(G)$, is the minimum width of a nice tree-decomposition of~$G$.
Efficient fixed-parameter algorithms are known for computing a nice tree-decomposition of near-optimal width~\cite{BodlaenderDDFLP16,Kloks94}. Whenever we speak of the treewidth of a directed graph, we mean the treewidth of its underlying undirected graph.

We let $T_t$ denote the subtree of $T$ rooted at a node $t$, and we use $\chi(T_t)$ to denote the set $\bigcup_{t'\in V(T_t)}\chi(t')$ and \(G_t\) to denote the graph \(G[\chi(T_t)]\) induced by the vertices in \(\chi(T_t\).

\begin{proposition}[\citeauthor{BodlaenderDDFLP16}, \citeyear{BodlaenderDDFLP16}]\label{fact:findtw}%
	There exists an algorithm which, given an $n$-vertex graph $G$ and an integer~$k$, in time $2^{\bigoh(k)}\cdot n$ either outputs a tree-decomposition of $G$ of width at most $5k+4$ and $\bigoh(n)$ nodes, or determines that $\tw(G)>k$.
\end{proposition}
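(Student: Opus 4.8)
This is a deep, self-contained result of Bodlaender, Drange, Dregi, Fomin, Lokshtanov and Pilipczuk, cited here essentially as a black box; I will only sketch the standard recursive \emph{balanced-separator} strategy that underlies it. The plan is to build the decomposition top-down via a recursive routine $\mathrm{Decompose}(H,W)$ that receives an induced subgraph $H$ together with a \emph{terminal set} $W\subseteq V(H)$ with $|W|=\bigoh(k)$, and either returns a tree-decomposition of $H$ of width $\bigoh(k)$ whose root bag contains $W$, or correctly reports that $\tw(G)>k$. Forcing the root bag to always contain $W$ is exactly what lets the pieces glue together consistently: when $H$ is split along a small separator, the separator vertices become the new terminals of each child, so child root bags are subsets of the parent bag and the three tree-decomposition axioms are preserved automatically.

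The first ingredient is the structural fact that bounded treewidth supplies balanced separators: if $\tw(H)\le k$, then for every $W\subseteq V(H)$ there is a set $S$ with $|S|\le k+1$ such that every connected component of $H-S$ meets at most $|W|/2$ vertices of $W$. This follows by taking a width-$k$ decomposition of $H$ and selecting the bag that balances the weight $W$ places on the subtrees. Accordingly, $\mathrm{Decompose}$ first searches for such an $S$: if $|V(H)|$ is already $\bigoh(k)$ it outputs the single bag $V(H)$; otherwise it sets the root bag to $W\cup S$ and recurses on each component $C$ of $H-S$ with the new terminal set $W_C=(W\cap C)\cup N_H(C)$. Since $S$ is balanced, the invariant $|W|=\bigoh(k)$ is preserved ($|W_C|\le \tfrac12|W|+(k+1)$), and a careful accounting of $|W\cup S|$ yields bags of size at most $5k+5$, i.e.\ width $5k+4$. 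If at some node no separator of the required order exists, the structural lemma guarantees $\tw(G)>k$, so rejection is safe.

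The crux — and where the single-exponential running time is won — is computing a balanced separator of order $\bigoh(k)$ in time $2^{\bigoh(k)}\cdot|V(H)|$; brute force over all $\binom{n}{k+1}$ candidates is hopeless. The idea is to \emph{guess} the trace of $S$ on the small terminal set: which terminals belong to $S$, and how the remaining terminals split into the two balanced sides. This is only $2^{\bigoh(k)}$ possibilities. For each guess, finding a separator of size $\le k+1$ realising that partition reduces to a bounded minimum vertex-cut between the two prescribed terminal sides, which Menger's theorem lets us solve with at most $k+1$ augmenting-path rounds in time $\bigoh(k\cdot|E(H)|)=\bigoh(k^2\cdot|V(H)|)$, using that a graph of treewidth $\le k$ has $\bigoh(k\cdot n)$ edges. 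Taking the best outcome over all guesses produces the separator.

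The main obstacle is then the amortisation required to turn this into a genuinely \emph{linear} $2^{\bigoh(k)}\cdot n$ bound with $\bigoh(n)$ nodes. The naive recurrence only gives total work $2^{\bigoh(k)}\cdot n^2$, because balancedness controls how $W$ is split but not how $|V(H)|$ shrinks, so the recursion depth need not be logarithmic. Securing the linear bound (and hence the $\bigoh(n)$ node count) needs the more delicate potential/charging argument of Bodlaender et al., which ensures that each separator computation can be charged against a sufficiently large amount of removed structure; this refinement, rather than the separator subroutine itself, is the technically hardest part. Finally, converting the resulting width-$(5k+4)$ decomposition into a \emph{nice} one with $\bigoh(n)$ nodes is routine.
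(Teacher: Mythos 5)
You should note first that the paper contains no proof of this proposition at all: it is quoted verbatim from Bodlaender, Drange, Dregi, Fomin, Lokshtanov and Pilipczuk (2016) and used strictly as a black box inside the proof of Theorem~\ref{thm:tw_algorithm}. So there is no internal argument to compare your sketch against; the only question is whether your sketch would stand as a proof of the stated result.

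As an outline of the classical balanced-separator scheme it is essentially sound: the separator lemma, the terminal-set invariant (your accounting with $|W|\le 4(k+1)$ does give bags of size $5k+5$, hence width $5k+4$), the $2^{\bigoh(k)}$ guessing of the trace of $S$ on $W$ followed by $k+1$ augmenting-path rounds (after first rejecting if $|E(H)|>k\cdot|V(H)|$), and the safety of rejection when no separator exists are all correct. But as a proof of \emph{this} proposition it has a genuine gap, which you yourself flag: the recursion as described only yields $2^{\bigoh(k)}\cdot n^2$, since balancedness controls $|W|$ but not $|V(H)|$, and nothing in the sketch delivers the $2^{\bigoh(k)}\cdot n$ running time or the $\bigoh(n)$ node count that the statement asserts. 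Even the intermediate refinement of also balancing the vertex set (Reed-style) only gets to $\bigoh(n\log n)$; the linear bound is precisely the contribution of the cited paper, obtained through its compression and amortization machinery, which your sketch defers rather than reproduces. Since the claimed time bound is the entire point of invoking this particular result (the paper needs it to state the running time in Theorem~\ref{thm:tw_algorithm}), an honest write-up should do exactly what the paper does --- cite the result --- or else carry out that amortization; the sketch, however faithful in structure, proves a strictly weaker statement.
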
  
\fi

\section{Parameterizing by $r$ and $b$}
\label{sec:agents}
In this section, we study $\phi$-\Schelling parameterized by the number of red and blue agents.
We firstly focus on the number of blue agents, $b$. Observe that in this case, if $r+b = |V|$, there is a trivial \XP algorithm since there are ${n \choose{b}} = \bigoh(|V|^b)$ assignments in total; for any choice of the positions of the $b$ blue agents, the remaining vertices have to be occupied by red agents. This \XP algorithm is the best we can hope for;~\citeauthor{BSV21}, although they do not mention it, show $\W{1}$-hardness for \WO-\Schelling parameterized by $b$.

The above-mentioned \XP algorithm works because we can trivially extend a choice for the positions of the blue agents to a complete assignment; there are no choices to be made for red agents. This is no longer possible when $r+b < |V|$. For this case, ~\citeauthor{ElkindGISV19} showed that \WO-\Schelling is \NP-hard even when $b=1$. However, their proof was relying on the assumption that the blue agent is ``stubborn'', i.e. the blue agent had a fixed position on the graph. We strengthen their result by showing that the problem remains \NP-hard, even when the blue agent is not stubborn.
\iflong
\begin{theorem}
\fi
\ifshort
\begin{theorem}[$\star$]
\fi
\label{thm:b-np}
If $r+b < |V|$, then $\phi$-\Schelling is \NP-hard, for $\phi \in \{\WO,\PO,\UVO,\GWO\}$, even when $b = 1$.
\end{theorem}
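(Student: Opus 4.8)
The plan is to reduce from \clique, in the spirit of the stubborn-blue reduction of \citeauthor{ElkindGISV19}, but to encode the pinning of the blue agent directly into the graph. The structural fact that drives everything is that with $b=1$ the single blue agent has no friend, so $f_{\mathrm{blue}}(\vbf)=0$ and its utility is $0$ under every assignment; hence $\sw(\vbf)$ equals the welfare of the red agents alone, and blue affects the objective only negatively, as an enemy-neighbor of whatever red sits next to it. Two further observations pin down where the difficulty must live. First, empty vertices are invisible to the utility function (they enter neither numerator nor denominator), so with a single color every non-isolated agent has utility $1$ and welfare is trivial to optimize; consequently all of the combinatorial richness must be routed through the \emph{enemy edges} incident to the one blue agent. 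Second, because a welfare-maximizing solver controls both colors, it will try to move blue to a vertex adjacent to no occupied red (a ``harmless'' spot), which would make the instance collapse to the trivial single-color case. The crux, and the precise point where I must improve on the stubborn setting, is therefore to build a graph in which no harmless spot survives whenever the red agents are placed well, so that the graph alone forces blue into the role the stubborn vertex used to play.

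Concretely, starting from a \clique instance $(H,k)$ with $|V_H|=n$ and $k\ge 2$, I would place $r=k$ red agents on a region carrying the adjacency of $H$ and attach a \emph{trap gadget} whose job is to guarantee that, in every (near-)optimal assignment, the red agents occupy a set $S$ of size $k$ and the blue agent is adjacent to all of $S$ while having no cheaper alternative. Conditioned on this pinning, each red at $v\in S$ sees exactly one enemy (blue) together with its $d_S(v)$ red friends, so its utility is $\tfrac{d_S(v)}{d_S(v)+1}$. Since $x\mapsto \tfrac{x}{x+1}$ is increasing and strictly concave and $d_S(v)\le k-1$, the total red welfare $\sum_{v\in S}\tfrac{d_S(v)}{d_S(v)+1}$ is at most $k-1$, with equality exactly when every $d_S(v)=k-1$, i.e.\ when $S$ induces a $k$-clique in $H$; otherwise it is strictly smaller. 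Setting the threshold $\theta=k-1$ then gives an assignment with $\sw(\vbf)\ge\theta$ if and only if $H$ has a clique of size $k$, and since the trap gadget contributes extra vertices we have $r+b=k+1<|V|$, so the hardness lands in the required regime with $b=1$.

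To transfer the result to \PO, \UVO, and \GWO I would show that in the constructed instance the clique configuration $\vbf^{*}$ \emph{dominates} all competitors in exactly the sense each notion needs. In the yes-case every red of $\vbf^{*}$ attains the maximal per-agent value $\tfrac{k-1}{k}$, so the sorted red profile of $\vbf^{*}$ weakly dominates that of any non-clique placement and strictly dominates it whenever $S$ is not a clique; moreover $\sw_B\equiv 0$ makes group-welfare domination reduce to a strict increase in red welfare. Hence $\vbf^{*}$ is \WO and, by Proposition~\ref{pro:notions}, also \UVO, \GWO, and \PO, while every assignment that does \emph{not} realize a $k$-clique is strictly dominated and therefore lies in none of the three classes. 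Thus every \PO, \UVO, or \GWO assignment of the instance decodes to a maximum clique, and this clique can be read off and verified in polynomial time; so a polynomial algorithm for any $\phi\in\{\WO,\PO,\UVO,\GWO\}$ would decide \clique, giving \NP-hardness uniformly.

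I expect the trap gadget of the previous paragraph to be the main obstacle: I must design graph structure that reproduces the effect of a stubborn enemy, forcing blue to be adjacent to the whole chosen red set in every optimal assignment, while keeping $r+b<|V|$ and—critically—never accidentally opening a harmless escape vertex that would let the solver recover welfare $r$ and trivialize the threshold. The tension is that a spot adjacent to all reds is the \emph{most} harmful for blue, so welfare maximization cannot be what drives blue there; the pinning must instead come from the gadget eliminating all cheaper alternatives, and verifying this robustly (for all competing red placements, not just $S$) is the delicate part. A secondary difficulty is the uniform treatment of the four notions: I must check that the strictness of the clique configuration's domination holds in precisely the form demanded by \PO, \UVO, and \GWO simultaneously, for which the concavity computation together with the fact that $\sw_B\equiv 0$ and Proposition~\ref{pro:notions} are the levers I would use.
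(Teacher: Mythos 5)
The heart of your plan---the ``trap gadget'' that pins the blue agent next to the whole chosen red set---is exactly the step you leave unconstructed, and with the parameters you fix ($r=k$, threshold $\theta=k-1$) it cannot be completed. Since the optimizer controls blue, blue adjacent to even a \emph{single} red of a red $k$-clique yields welfare $k-\frac{1}{k}>k-1$ (for $k\ge 2$), so your threshold is beaten by configurations that encode no clique of $H$; your calculus only works if blue is adjacent to \emph{all} $k$ reds, yet welfare maximization actively pushes blue away from reds. Making adjacency-to-all-reds unavoidable across all competing red placements (not just the intended $S$) would require essentially every pair of vertices to be adjacent---consider a placement with blue at $u$ and a red at $v$---which forces $G$ to be complete and destroys any encoding of $H$. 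Your \PO/\UVO/\GWO transfer inherits the same flaw: if any assignment lets blue touch only one red of a red clique, its sorted red profile $(1,\dots,1,\frac{k-1}{k})$ strictly dominates the flat profile $(\frac{k-1}{k},\dots,\frac{k-1}{k})$ of your intended $\vbf^{*}$, so $\vbf^{*}$ would not even be \PO.

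The paper resolves precisely this tension by changing the numerics rather than designing a gadget: it uses $n^2+k$ red agents, a clique $X$ on $n^2$ vertices, a vertex $x$ adjacent to all of $V(H)$ and to nothing in $X$, and all edges between $V(H)$ and $X$. Near-full occupancy does the pinning. Any blue position in $V(H)\cup X$ damages at least $n^2+k-(n-1)$ red agents by $\frac{1}{n^2+k}$ each, so $x$ is the uniquely least harmful blue spot; with blue at $x$, pigeonhole forces exactly $k$ reds into $V(H)$ at any undominated assignment, and a red in $V(H)$ missing a red neighbor loses $\frac{1}{n^2+k-1}>\frac{1}{n^2+k}$. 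Hence the sorted red vector $(1,\ldots,1,\frac{n^2+k-1}{n^2+k},\ldots,\frac{n^2+k-1}{n^2+k})$ is attainable iff $H$ has a $k$-clique, and it strictly dominates every other outcome, which gives the uniform treatment of all four notions: every \WO, \UVO, or \GWO assignment is \PO by Proposition~\ref{pro:notions}, and every \PO assignment decodes the answer. Your high-level scaffolding (reduce from \clique, exploit that blue's utility is identically $0$, conclude via strict domination plus Proposition~\ref{pro:notions}) matches the paper, but without inflating $r$ to $n^2+k$ and leaving only $n-k$ vertices empty, the central pinning step fails and the reduction does not go through.
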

\iflong
\begin{proof}
\fi
\ifshort
\begin{proof}[Proof sketch]
\fi
We will prove hardness via a reduction from \clique,
 where we are given a graph $H$ and an integer $k$ and the goal is to decide the existence of a set $S\subseteq V(H)$, where $|S|=k$, such that $H[S]$ induces a clique\iflong, that is, there is an edge between every pair of vertices in $H[S]$\fi. We prove the theorem for \(\phi = \PO\), which implies the hardness for the remaining concepts by Proposition~\ref{pro:notions}. 
 Given an instance \tuple{H, k} of \clique, where $H = (V',E')$ and $|V'| = n$, we construct an instance of \PO-\Schelling as follows:
\begin{itemize}
\item There are $n^2 + k$ red agents and one blue agent. 
\item $G'$ is a clique of size $n^2$, where $G'=(X,Y)$.
\item The topology $G=(V,E)$ is defined so that $V = x \cup V' \cup X$ and $E= E' \cup \{\ xv : v \in V'\} \cup \{\ vw : v \in V', w \in X\}$.
\end{itemize}

\newcommand{\ubfClique}{\ubf_{\mathrm{PO}}}
Note that since $b=1$, it follows that for every assignment \(\vbf\) we get \(\ubf_B(\vbf) = (0)\).
We show that \(H\) admits a clique of size \(k\) if and only if the utility-vector for red agents of every \PO assignment \vbf is equal to the vector \(\ubfClique = (1,\ldots, 1, \frac{n^2 + k -1}{n^2 + k}, \ldots,\frac{n^2 + k -1}{n^2 + k})\), where there are $n^2$ red agents with utility \(1\) and \(k\) red agents with utility \(\frac{n^2 + k -1}{n^2 + k}= 1-\frac{1}{n^2 + k}\).

\ifshort
The correctness follows by first observing that at any \PO assignment the blue agent occupies vertex $x$. Given that, the utility of the red agents is maximized when $n^2$ red agents occupy $G'$ and the positions of the remaining $k$ red agents form a clique in $G$. 
\fi
\iflong
First, assume $H$ has a clique of size $K$. We create an assignment \vbf for \tuple{G,A} such that $n^2$ red agents are on vertices in $G'$ and the remaining $k$ red agents are assigned to the clique in $H$. The blue agent is assigned to vertex $x$. Observe that, since there is no edge between $G'$ and $x$, all agents in $G'$ are only connected to agents of their type. No agents in $G'$ are isolated because it is clique of size $n^2$. Hence, $u_i(\vbf) = 1$ for every agent $i \in G'$. For the remaining $k$ red agents, observe that they are connected to every other red agent and the blue agent. Hence, $u_i(\vbf) = \frac{n^2 + k -1}{n^2 + k}$ for every agent $i \in H$. Therefore, $\ubf(\vbf) = \ubfClique$. Thus, if there exists a clique of size $k$ in G then an assignment \vbf with the utility vector equal to \(\ubfClique\) exists. We need to show that if such assignment \vbf with \(\ubf_R(\vbf) = \ubfClique\) exists, then it is necessarily \PO and there exists a clique of size \(k\) in \(H\).

Let \(\vbf\) be an assignment in \tuple{G,A}. We show that either \(\ubfClique\) strictly dominates \(\ubf_R(\vbf)\), or \(\ubf_R(\vbf)=\ubfClique\) and the graph \(H\) admits a clique of size \(k\). 
Observe that if the blue agent is not at vertex $x$, but instead at a vertex in $V(H)\cup X$, then minimum $n^2 + k - (n-1)$ of the red agents lose utility. Each of these red agents lose minimum $\frac{1}{n^2+k}$ of their utility due to the blue agent becoming part of their neighborhood and the fact that there are \(n^2+k+1\) agents in total. It follows that in this case \(\ubfClique\) strictly dominates \(\ubf_R(\vbf)\). Else, the blue agent is assigned the vertex $x$ by \vbf. Now, every red agent at a vertex in \(H\) loses at least $\frac{1}{n^2+k}$ of their utility. It follows that exactly \(k\) red agents are assigned a vertex in \(H\), otherwise \(\ubfClique\) strictly dominates \(\ubf_R(\vbf)\). 
Finally, if a red agent $i$ at a vertex in $H$ is not adjacent to all the red agents, then the loss of utility of agent \(i\) is at least $\frac{1}{n^2+k-1}> \frac{1}{n^2+k}$. Therefore, if the $k$ red agents at the vertices of $H$ do not form a clique, then \(\ubfClique\) strictly dominates \(\ubf_R(\vbf)\). 
It follows that if \(\ubfClique\) does not strictly dominates \(\ubf_R(\vbf)\), then the $k$ agents that are assigned the vertices of $H$ by the assignment \(\vbf\) form a clique and \(\ubf_R(\vbf)=\ubfClique\).  
\fi
\end{proof}

\noindent On the positive side, we can easily get an \XP algorithm for \perfect-\Schelling parameterized by $b$.
\iflong
\begin{theorem}
\fi
\ifshort
\begin{theorem}[$\star$]
\fi
\label{thm:perfect-xp-b}
For \perfect-\Schelling there is an \XP-algorithm parameterized by $b$.
\end{theorem}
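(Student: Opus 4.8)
The plan is to exploit a clean structural characterization of perfect assignments and then brute-force only over the positions of the $b$ blue agents. In a perfect assignment every agent has utility $1$, so every occupied vertex has at least one occupied neighbour and all of its occupied neighbours share its colour. Hence, writing $S_B$ for the set of vertices occupied by blue agents and $S_R$ for the set occupied by red agents, a perfect assignment is exactly a pair $(S_B,S_R)$ with $|S_B|=b$, $|S_R|=r$ such that (i) there is no edge of $G$ between $S_B$ and $S_R$, (ii) $G[S_B]$ has no isolated vertex, and (iii) $G[S_R]$ has no isolated vertex. Conditions (ii) and (iii) encode that every agent has a same-colour neighbour and is not isolated, and (i) encodes that no agent has an enemy neighbour.

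First I would iterate over all $\binom{|V|}{b}=\mathcal{O}(|V|^b)$ candidate sets $S_B$ of size $b$, discarding any for which $G[S_B]$ has an isolated vertex (so that (ii) holds). For a surviving $S_B$, condition (i) forces every red agent to avoid $S_B\cup N(S_B)$, so the red agents must occupy a set $S_R\subseteq R_{\mathrm{avail}}:=V\setminus(S_B\cup N(S_B))$; and since no vertex of $R_{\mathrm{avail}}$ is adjacent to $S_B$, such a placement is perfect precisely when $|S_R|=r$ and $G[R_{\mathrm{avail}}][S_R]$ has no isolated vertex. Thus, after guessing the blue positions, everything reduces to a single polynomial-time red-placement test, and the enumeration contributes only the factor $\mathcal{O}(|V|^b)$ needed for an \XP bound.

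The remaining task is to decide, in polynomial time, whether $G[R_{\mathrm{avail}}]$ has an induced subgraph on exactly $r$ vertices with minimum degree at least $1$. The key observation is that within any connected component $C$ of $G[R_{\mathrm{avail}}]$ one can realise an isolated-vertex-free vertex set of every size in $\{0\}\cup\{2,3,\dots,|C|\}$ and of no other size: grow a connected subtree of a spanning tree of $C$ one vertex at a time, noting that any connected subtree of size at least $2$ has no isolated vertex, while size $1$ is impossible. Writing $c_1,\dots,c_m$ for the sizes of the components of $G[R_{\mathrm{avail}}]$ of size at least $2$, the attainable totals are exactly the sums $\sum_{i\in U}s_i$ with $U\subseteq[m]$ and $s_i\in\{2,\dots,c_i\}$, i.e. the union over $U$ of the integer intervals $[\,2|U|,\ \sum_{i\in U}c_i\,]$ together with $0$. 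Hence $r$ is attainable iff $r=0$ or there is a set $U$ with $2|U|\le r\le\sum_{i\in U}c_i$, which is decided greedily (take the $\lfloor r/2\rfloor$ largest components) or by a trivial $\mathcal{O}(m\cdot r)$ dynamic program. If some $S_B$ admits a feasible red placement we output the resulting perfect assignment; otherwise we report that none exists.

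Correctness is immediate from the characterization, and the running time is $\mathcal{O}(|V|^b)$ times a polynomial factor, i.e. $|V|^{\mathcal{O}(b)}$, as required. The only genuinely non-routine step is the polynomial solvability of the red-placement subproblem: the crux is recognising that demanding a same-colour neighbour for every placed red agent decouples across connected components into the simple ``$0$, or between $2$ and $|C|$'' size condition, turning a superficially constrained selection problem into an easy interval/subset-sum feasibility test.
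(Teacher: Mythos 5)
Your proof is correct and takes essentially the same approach as the paper's: enumerate the $\mathcal{O}(|V|^{b})$ placements of the blue agents subject to $G[S_B]$ having no isolated vertex, delete the closed neighbourhood of the blue set, and then decide in polynomial time whether the $r$ red agents fit into the remaining components of size at least $2$. The only difference is in that last step, where the paper runs a greedy fill of the components in decreasing size order with an ad hoc reassignment when the final agent would be stranded, whereas you prove the exact attainability characterization (each component of size $c\ge 2$ realizes every size in $\{0\}\cup\{2,\dots,c\}$, so feasibility reduces to an interval test solved by taking the $\lfloor r/2\rfloor$ largest components) --- a cleaner and equally efficient argument.
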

\iflong
\begin{proof}
Observe that when $b=1$, then trivially there is no \perfect assignment since the unique blue agent cannot get utility 1. Now, for $b>1$ we proceed as follows. We guess the $b$ vertices, denoted $B$, that blue agents occupy under the constraint that every connected component induced by $B$ has size at least 2. There are $O(|V|^b)$ such many guesses. Then, we consider the graph induced by $S := V-B-N(B)$, where $N(B)$ contains all the vertices adjacent to at least one vertex in $B$. We focus on the connected components induced by $S$ that have size at least 2; let us denote this graph $G'$. If $G'$ contains less than $r$ vertices, then we reject the guess. Otherwise, we start assigning red agents to the vertices of $G'$ in the following manner. We order the connected components of $G'$ in decreasing size. Then, greedily we start assigning red agents to the vertices of each component under the constraint that every new agent we assign to the current connected component is either the first agent assigned to a vertex of this component, or he is adjacent to a vertex already occupied by an agent. When we will consider the last agent, there are two cases.
\begin{itemize}
    \item The last agent is assigned to a vertex adjacent to a vertex with a red agent. In this case, every red agent has at least one red neighbor and no blue neighbors and every blue agent has a blue neighbor and no red neighbors. Thus, the assignment is \perfect.
    \item The last agent $a$ is the first agent assigned to a vertex of a connected component of $G'$. Then, we check the largest connected component of $G'$. If there is a red agent that can be reassigned to a vertex adjacent to $a$, while the remaining red agents of his previous component still have utility 1, then we make the reassignment and we have a \perfect assignment. Otherwise, we can conclude that the original guess of $B$ cannot be extended to a \perfect assignment and we proceed to the next guess.
\end{itemize}
Hence, in time $O(|V|^b)$ we can decide if a \perfect assignment exists and if it does compute one in the same time.
\end{proof}
\fi

\noindent Next, we show that the \XP-algorithm from Theorem~\ref{thm:perfect-xp-b} is actually the best we can hope for. In fact we show that the problem is hard even if we parameterize by $r+b$.
\iflong
\begin{theorem}
\fi
\ifshort
\begin{theorem}[$\star$]
\fi
\label{thm:rb-w-h}
\perfect-\Schelling is $\W{1}$-hard when parameterized by $r+b$.
\end{theorem}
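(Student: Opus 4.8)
The plan is to prove $\W{1}$-hardness via a parameterized reduction from a $\W{1}$-hard problem whose natural parameter corresponds cleanly to agent counts. Given that Theorem~\ref{thm:b-np} already exploited \clique, and given that a \perfect assignment requires \emph{every} agent to have utility exactly $1$ (equivalently, every occupied vertex sees only same-color occupied neighbors), I would reduce from \textsc{Multicolored Clique} (or \textsc{Multicolored Independent Set}), which is $\W{1}$-hard parameterized by the clique size $k$. The key structural observation driving the reduction is that a \perfect assignment of a monochromatic block of $r$ agents onto a subset $S$ of vertices requires that $G[S]$ contain no \emph{isolated} vertex and that $S$ receives no edge to any other occupied vertex of a different color. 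So I want to encode ``$S$ is a clique'' or ``$S$ is an independent set'' as ``the agents placed on $S$ all achieve utility $1$,'' and make the agent count $r+b$ a function of $k$ only.

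First I would fix the target family of graphs. Since the theorem only asserts $\W{1}$-hardness in general (the tree/large-family restriction is Theorem~\ref{thm:wh-k-types}, not this one), I have freedom in the topology. The cleanest encoding uses a single color and the independent-set-like flavour of perfection: construct $G$ from the \textsc{Multicolored Independent Set} instance $H$ with color classes $V_1,\dots,V_k$. For each color class $V_i$ I would attach a small gadget that forces exactly one vertex of $V_i$ to be chosen, and I would set up the agents so that a \perfect assignment exists if and only if the chosen vertices form an independent set in $H$. Concretely, one natural device is to give each selector agent a private ``pendant'' partner so that utility $1$ is attainable only by occupying a matched pair inside one class, while an edge of $H$ between two selected vertices would create a cross-neighborhood that destroys perfection. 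The number of agents is then $\bigoh(k)$ (a constant number per class), so $r+b = f(k)$, which is exactly what $\W{1}$-hardness parameterized by $r+b$ demands; and because every vertex must be examined only locally for the utility-$1$ condition, the forward and backward directions reduce to a combinatorial check that selected vertices are pairwise non-adjacent.

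The two directions I would then verify are routine once the gadget is right. For the forward direction, from a multicolored independent set $\{w_1,\dots,w_k\}$ I place the agents on the $w_i$'s together with their pendant partners and check that each agent sees only friends, using that non-adjacency across classes means no forbidden cross-edge is occupied. For the backward direction, from a \perfect assignment I read off, for each class, the unique selected vertex (the selector gadget guarantees exactly one feasible choice yields utility $1$), and I argue that if two selected vertices were adjacent in $H$ then at least one agent would see an enemy or, in a single-color encoding, the occupied set would violate the no-edge-to-outside condition, contradicting utility $1$.

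The main obstacle I expect is the gadget design that simultaneously (i) forces a well-defined ``one choice per color class'' without blowing up the agent count beyond $f(k)$, and (ii) ensures that the only way to realize utility $1$ for \emph{all} agents is to make locally consistent choices, so that no clever alternative placement salvages perfection without corresponding to a genuine independent set. In particular I must rule out degenerate \perfect assignments that avoid the intended vertices entirely or that pack agents into an unintended dense region; handling isolated-vertex and spillover cases is where the careful bookkeeping lives. A secondary subtlety is deciding between a one-color and two-color encoding: with one color, perfection is purely about avoiding edges to unoccupied-by-friends vertices and about avoiding isolated agents, whereas a two-color encoding lets enemy-adjacency do the forbidding work more transparently; I would pick whichever makes the ``exactly one per class'' constraint cleanest, and I anticipate the two-color version (using the blue agents as blockers whose placement forbids certain red choices) to give the tightest $r+b=\bigoh(k)$ bound.
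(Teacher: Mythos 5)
Your proposal is a reduction \emph{plan} whose essential gadget is never actually constructed, and the two concrete devices you float both fail at exactly the point you flag as ``the main obstacle.'' First, the one-color encoding cannot work even in principle: in a \perfect assignment, adjacencies between same-type agents are never penalized---utility is computed only over \emph{occupied} neighbors, and friends only help---so a monochromatic instance is \perfect as soon as the occupied set induces no isolated vertex. Edges of $H$ between two selected vertices therefore do not ``destroy perfection''; they create it. Independence of the selected set simply cannot be expressed with one color. Second, the pendant-partner device admits degenerate \perfect assignments: agents of a type are interchangeable, there are only $f(k)$ of them in a large graph, and nothing in your sketch forces ``one selected vertex per class.'' The $k$ red agents can occupy $\lceil k/2\rceil$ disjoint pendant pairs packed into a single class (or any harmless pocket of the construction), giving a \perfect assignment regardless of whether $H$ has a multicolored independent set, which collapses the backward direction. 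You explicitly defer resolving (i) exactly-one-per-class and (ii) ruling out such spillover placements, but with $O(k)$ blue ``blockers'' in a graph of unbounded size there is no local mechanism that pins agents to designated gadget positions---this is the crux of the whole theorem, not a routine verification.

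The paper sidesteps gadget design entirely by reducing from \biclique parameterized by $k$, which is \W{1}-hard~\cite{Lin2014}: given bipartite $H=(L\cup R,Y)$, take $G$ to be the complement of $H$ and set $r=b=k$. Because $Y$ has no edges inside $L$ or inside $R$, both sides become cliques in $G$; hence in any \perfect assignment a red and a blue agent can never share a side, so all $k$ blues sit in one side and all $k$ reds in the other, and every occupied cross-pair must be a non-edge of $G$, i.e., an edge of $H$. Thus \perfect assignments correspond exactly to $K_{k,k}$ subgraphs of $H$. The positional discipline your local gadgets were supposed to impose is here enforced globally and for free by the clique structure of the complement---this is the missing idea your proposal needed.
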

\iflong
\begin{proof}
\fi
\ifshort
\begin{proof}[Proof sketch]
\fi
We will prove the theorem via a reduction from \biclique. The input for an instance of \biclique is a bipartite graph $H=(L\cup R, Y)$ and an integer $k$. The task is to decide whether $H$ contains a complete bipartite subgraph with $k$ vertices in each side of $H$. It is known that \biclique is $\W{1}$-hard when parameterized by $k$~\cite{Lin2014}. Given $H=(L\cup R, Y)$ and $k$, we construct an instance \tuple{G,A} of \perfect-\Schelling as follows. The graph $G=(V,E)$ will be the complement of $H$. This means that $V=L\cup R$ and $uv \in E$ if and only if $uv \notin Y$. Furthermore, we create $k$ red and $k$ blue agents, i.e. $r=b=k$. We will ask if there is a perfect assignment \vbf.

\ifshort
The correctness follows from the following two facts for any perfect assignment: all the red agents have to be on the left side of the graph and all the blue agents on the other; no agent is adjacent to an enemy.
\fi
\iflong
Firstly, assume that $\{l_1, \ldots, l_k\} \in L$ and $\{r_1, \ldots, r_k\} \in R$ form a complete bipartite subgraph of $H$. We create an assignment \vbf for \tuple{G,A}, by assigning a blue agent to vertex $l_i$ and a red agent to vertex $r_i$ for every $i \in [k]$. Observe that since $l_i$ is adjacent to all $r_1, \ldots, r_k$ in $H$, it follows that in $G$, which recall is the dual of $H$, the vertex $l_i$ is not adjacent to any of the vertices $r_1, \ldots, r_k$. Hence, there is no edge $uv \in E$ such that $u$ is occupied by a red agent and $v$ is occupied by a blue agent. In addition, observe that in $G$ the vertices $l_1, \ldots, l_k$ induce a clique and the vertices $r_1, \ldots, r_k$ induce a clique as well. Hence, under \vbf every blue agent is adjacent to all remaining blue agents and no red agent, and every red agent is a neighbor to all remaining red agents and no blue agent. This means that \vbf is perfect. 

For the other direction, assume that there exists a perfect assignment \vbf in \tuple{G,A}. 
Hence, no red agent has a blue agent as a neighbor in \vbf; if this was the case, then both agents would get utility strictly less than 1. So, assume that under \vbf, $v \in L$ is occupied by a blue agent. Then there is no vertex $u \in L$ occupied by a red agent under \vbf; this is because $L$ forms a clique in $G$. Thus, all red agents occupy vertices of $R$ and no blue agent occupies a vertex in $R$; this is because $R$ forms a clique in $G$. Hence, we can conclude that in \vbf:
\begin{itemize}
    \item the blue agents occupy the vertices $l_1, l_2, \ldots, l_k$ in $L$;
    \item the red agents occupy the vertices $r_1, r_2, \ldots, r_k$ in $R$;
    \item there is no edge $l_ir_j$, where $i \in [k]$ and $j \in [k]$.
\end{itemize}
Hence, in the dual of $G$, which is $H$, for every $i \in [k]$ and every $j \in [k]$ the edge $l_ir_j$ exists. Thus, $\{l_1, \ldots, l_k\}$ and $\{r_1, \ldots, r_k\}$ form a solution of \biclique.
\fi
\end{proof}

The combination of Theorem~\ref{thm:rb-w-h}, Proposition~\ref{pro:notions}, and Observation~\ref{obs:perfect}, gives us the following corollary.
\begin{corollary}
\label{cor:rb-wh}
$\phi$-\Schelling is $\W{1}$-hard when parameterized by $r+b$, for $\phi \in \{\WO, \PO, \UVO, \GWO\}$.
\end{corollary}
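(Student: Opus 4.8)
The plan is to derive Corollary~\ref{cor:rb-wh} directly from the machinery already assembled, rather than to reduce from \biclique again. The key observation is that Theorem~\ref{thm:rb-w-h} establishes $\W{1}$-hardness (parameterized by $r+b$) of deciding whether a \perfect assignment exists. So the strategy is to show that, for the instances produced by that reduction, solving any of $\WO$-, $\PO$-, $\UVO$-, or $\GWO$-\Schelling lets us decide the \perfect question. The whole argument should fit in a short paragraph or two of bridging logic.

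First I would observe that the hardness reduction in Theorem~\ref{thm:rb-w-h} is a \emph{yes-instance} reduction: given a \biclique instance \tuple{H,k}, it builds a Schelling instance \tuple{G,A} that admits a \perfect assignment if and only if $H$ has the desired biclique. Now fix any $\phi \in \{\WO, \PO, \UVO, \GWO\}$ and suppose we have an algorithm that, in time $f(r+b)\cdot n^{\bigoh(1)}$, computes a $\phi$-optimal assignment \vbf for \tuple{G,A}. I would then use Observation~\ref{obs:perfect} together with Proposition~\ref{pro:notions}: by Proposition~\ref{pro:notions} any $\phi$-optimal assignment is in particular \PO, and by Observation~\ref{obs:perfect}, if the instance admits \emph{any} \perfect assignment then every \PO assignment is \perfect. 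Hence the computed \vbf is \perfect precisely when a \perfect assignment exists. Checking whether a given assignment is \perfect is trivial (verify each agent has utility $1$), so we can decide the \perfect question in \FPT-time using the hypothetical $\phi$-solver.

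Putting these pieces together gives the reduction: if $\phi$-\Schelling were fixed-parameter tractable in $r+b$, then \perfect-\Schelling would be as well, contradicting Theorem~\ref{thm:rb-w-h}. Since the parameter $r+b$ is preserved exactly (the instance is unchanged; we merely call a different solver), the $\W{1}$-hardness transfers verbatim.

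I do not expect a genuine obstacle here, since the corollary is essentially a bookkeeping consequence of three earlier results. The only point requiring care is making explicit that the $\phi$-solver is assumed to \emph{return an assignment} (not just a value), so that we can test it for perfectness; if one instead only has a decision oracle for the optimal welfare value, one would need the additional remark that in a perfect instance the optimal $\sw$ equals $|A|$ and the optimal sorted utility vector is the all-ones vector, so even the value alone distinguishes the two cases. Either framing closes the argument; I would state the assignment-returning version as it is cleanest and matches how the other algorithmic results in the paper are phrased.
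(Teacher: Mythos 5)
Your proposal is correct and takes essentially the same route as the paper, which derives Corollary~\ref{cor:rb-wh} as an immediate combination of Theorem~\ref{thm:rb-w-h}, Proposition~\ref{pro:notions}, and Observation~\ref{obs:perfect} --- exactly the three ingredients you assemble. Your closing remark distinguishing an assignment-returning solver from a value-only oracle is a harmless elaboration of the same bookkeeping and introduces no divergence from the paper's argument.
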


\section{Bounded Degree Graphs}
\label{sec:bounded-d}
In light of the negative results from the previous section, we turn our attention on instances where the structure of $G$ is restricted. In this section, we focus on the maximum degree $\Delta$ of $G$. We prove that \WO-\Schelling and \GWO-\Schelling are \NP-hard even when $G=(V,E)$ is cubic, i.e. every vertex has degree 3.

\begin{theorem}
\label{thm:bounded-NP-h}
\WO-\Schelling and \GWO-\Schelling on cubic graphs are \NP-hard.
\end{theorem}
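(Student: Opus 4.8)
The plan is to prove \NP-hardness of both \WO-\Schelling and \GWO-\Schelling on cubic graphs via a polynomial-time reduction from a suitable \NP-hard problem on cubic (or bounded-degree) graphs. The natural candidate is a variant of \minbisection or an independent-set/perfect-matching style problem restricted to cubic graphs, since these are classically \NP-hard and the cubic structure of the input transfers cleanly to the Schelling topology. First I would fix the reduction so that the graph $G$ of the Schelling instance is literally the cubic input graph (or a cubic graph obtained from it by a local gadget replacement that preserves cubicity), and then choose the numbers of red and blue agents $r,b$ so that placing agents corresponds exactly to a bipartition or selection in the source problem. Because the degree is exactly $3$, every agent's utility is a fraction with denominator in $\{1,2,3\}$, so the social welfare $\sw(\vbf)$ takes values in a fixed finite set of rationals; the key arithmetic observation to establish early is that maximizing $\sw$ forces the agents of like color to cluster along edges of $G$, and a boundary edge (an edge between a red-occupied and blue-occupied vertex, or to an empty vertex) costs a quantifiable amount of welfare.

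The central step is the equivalence between optimal social welfare and the combinatorial target in the source instance. I would argue that $\sw(\vbf)$ is maximized precisely when the number of ``bad'' edges --- edges incident to an enemy neighbor or to an empty vertex --- is minimized, after normalizing for the fixed degree. Concretely, for a cubic graph one writes $\sw(\vbf) = \sum_{i\in A} f_i(\vbf)/|N_i(\vbf)|$ and shows, by a counting argument over the edges of $G$, that this sum is a strictly decreasing function of the number of monochromatic-boundary and boundary-to-empty incidences. This lets me translate ``social welfare at least $W^\star$'' into ``the agent placement induces at most $c$ cut edges,'' which is exactly a bisection/partition constraint. The fact that \WO asks for an \emph{optimal} assignment (rather than one meeting a threshold) is handled by computing the exact optimal value $W^\star$ from the reduction parameters and showing that $W^\star$ is attained if and only if the source instance is a yes-instance.

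For the \GWO part I would leverage Proposition~\ref{pro:notions} only in the easy direction and instead give a direct argument, since \GWO is a two-objective (per-color welfare) notion and is not implied by \WO-hardness. The plan is to design the reduction so that the two color classes are ``symmetric'' and balanced ($r=b$, say), and so that any improvement to $\sw_R$ can only come at the expense of $\sw_B$ unless the target partition exists; in that balanced regime a \GWO assignment must simultaneously maximize both $\sw_R$ and $\sw_B$, collapsing \GWO to the same combinatorial optimum used for \WO. Making the two halves genuinely symmetric --- perhaps by taking two disjoint copies of the gadget graph, one dedicated to each color, joined so that cross-color interference is forced exactly when the bisection is suboptimal --- is the technical heart of the \GWO reduction.

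The main obstacle I anticipate is the cubicity constraint interacting with the agent-count and empty-vertex bookkeeping. On a cubic graph the utilities are coarse (only denominators $1,2,3$ are possible), which is helpful for clean arithmetic but dangerous for the reduction's faithfulness: I must ensure that no unintended high-welfare assignment exists that exploits degree-$3$ structure to beat $W^\star$ without encoding a valid solution, and that leaving vertices empty (allowed since $|A|\le|V|$) cannot be used to game the objective. I would therefore spend most of the effort on the ``soundness'' direction --- proving that \emph{every} optimal assignment, not merely some canonical one, decodes to a source-instance solution --- and on verifying that the gadget construction keeps $G$ cubic at every vertex (including at gadget boundaries), since a single vertex of degree $\neq 3$ would break the claim.
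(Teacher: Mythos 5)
Your plan is, at its core, the paper's reduction: \minbisection on cubic graphs (\NP-hard there already~\cite{bui1987graph}), with the Schelling topology being the input graph itself --- no gadgets needed, so cubicity is free --- and $r=b=\frac{|V|}{2}$, so that an assignment is exactly a bisection and, since every vertex is occupied and has degree $3$, each agent's utility is $\frac{3-e_i(\vbf)}{3}$; summing over a color class turns ``group welfare at least $\frac{|V|}{2}-\frac{k}{3}$'' into ``cut at most $k$.'' One genuine slip, though: your monotonicity claim is false as stated. Incidences to \emph{empty} vertices do not cost welfare, because empty neighbors simply drop out of the denominator of $u_i(\vbf)=f_i(\vbf)/(f_i(\vbf)+e_i(\vbf))$ --- a red agent with one red neighbor and two empty neighbors has utility $1$. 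The paper sidesteps this entirely by setting $|A|=|V|$, so no vertex can be left empty; you gesture at full occupancy but must commit to it, after which your worry about ``gaming'' the objective with empty vertices evaporates.

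The other divergence is in the \GWO part, where you anticipate a ``technical heart'' and float a two-copy construction; both are unnecessary, and the two-copy idea is dangerous (two disjoint color-dedicated components would let both groups achieve full welfare regardless of the source instance, and joining the copies reopens the cubicity and cut-accounting problems you are trying to avoid). The paper's observation is that in the fully occupied cubic instance, every bichromatic edge has exactly one red and one blue endpoint, so $\sum_{i\in B}e_i(\vbf)=\sum_{i\in R}e_i(\vbf)=\ell$ and hence $\sw_R(\vbf)=\sw_B(\vbf)=\frac{|V|}{2}-\frac{\ell}{3}$ for \emph{every} assignment; both group welfares are maximized simultaneously iff the cut $\ell$ is minimized, so \GWO collapses to the same optimum with no extra symmetrization. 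The paper also runs the logic in the opposite order from yours: it proves hardness for \GWO and derives \WO from Proposition~\ref{pro:notions}, since every \WO assignment is \GWO, so an algorithm returning a \WO assignment also returns a \GWO one and hardness transfers from the weaker notion to the stronger. Your plan (direct threshold argument for \WO, separate argument for \GWO) is sound once you note the identity $\sw_R(\vbf)=\sw_B(\vbf)$, but it duplicates work the single reduction already does.
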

\begin{proof}
It follows from Proposition~\ref{pro:notions} that it suffices to show that \GWO-\Schelling on cubic graphs is \NP-hard.
The proof is via a reduction from \minbisection on cubic graphs~\cite{bui1987graph}. An instance of \minbisection consists of a graph $G=(V,E)$ and an integer $k$. We have to decide if there exists a partition of $V$ to $L$ and $R$ such that $|L|=|R|=\frac{|V|}{2}$ where the number of edges $uv \in E$ with $u \in L$ and $v \in R$ is at most $k$. The constructed instance \tuple{G, A} of \GWO-\Schelling is on the same graph $G$ and it has $\frac{|V|}{2}$ red agents and $\frac{|V|}{2}$ blue agents. We will ask if there is an assignment \vbf such that the welfare of each group is at least $\frac{|V|}{2}-\frac{k}{3}$.

So, assume that there is a partition $V$ to $L$ and $R$ with $|L|=|R|=\frac{|V|}{2}$ such that there are exactly $\ell\le k$ edges between $L$ and $R$. We create an assignment \vbf by placing all blue agents on $L$ and all red agents on $R$. Observe that since $|A| = |V|$ and since the graph is cubic, for every $i$ we have that $f_i(\vbf) = 3-e_i(\vbf)$ and that $|N_i(\vbf)|=3$. Thus, for \(X\in \{R,B\}\)
\begin{align*}
    \sw_X(\vbf) & = \sum_{i \in X} \frac{f_i(\vbf)}{|N_i(\vbf)|} = \sum_{i \in X} \frac{3-e_i(\vbf)}{3} \\ & = \frac{|V|}{2}-\sum_{i \in X} \frac{e_i(\vbf)}{3}.
\end{align*}
Furthermore, observe that if $i$ is a blue agent, i.e. occupies a vertex in $L$, all his enemies occupy vertices in $R$ and vice versa. Hence, $\sum_{i \in B} e_i(\vbf) = \sum_{i \in R} e_i(\vbf) = \ell\le k$.
Thus, for \(X\in \{R,B\}\), $\sw_X(\vbf) = \frac{|V|}{2} - \frac{\ell}{3}\ge \frac{|V|}{2} - \frac{k}{3}$.

For the other direction, assume that we have an assignment \vbf such that for \(X\in \{R,B\}\) we have $\sw_X(\vbf)\ge \frac{|V|}{2} - \frac{k}{3}$. From the arguments above, we know that $\sw_R(\vbf) = \sw_B(\vbf) = \frac{|V|}{2}-\frac{\ell}{3}$, where \(\ell\) is the number of edges between vertices assigned to blue agents and the vertices assigned to red agents. Thus, $\ell \le k$ and there exists at most $k$ edges in $G$ where one of the endpoints is occupied by a red agent and the other endpoint is occupied by a blue agent. The proof is completed by creating a partition of $V$ by setting $L$ to contain all the vertices occupied by blue agents and $R$ to contain all the vertices occupied by red agents. It follows that there are at most $k$ edges between $L$ and $R$.
\end{proof}

Theorems~\ref{thm:rb-w-h} and~\ref{thm:bounded-NP-h} show that we cannot hope for an efficient algorithm, at least for \WO and \GWO, just by parameterizing only by $r+b$ or only by the maximum degree $\Delta$. We complement this by providing an FPT algorithm for $\phi$-\Schelling when parameterized by $r+b+\Delta$.
\iflong 
\begin{theorem}
\fi
\ifshort 
\begin{theorem}[$\star$]
\fi
\label{thm:fpt-rbD}
For every $\phi \in \{\WO, \PO, \UVO, \GWO\}$, $\phi$-\Schelling is in \FPT parameterized by $r+b+\Delta$.
Moreover, $\phi$-\Schelling admits a kernel with at most \(\mathcal{O}(\Delta^2\cdot r^2\cdot b^2)\) many vertices. 
\end{theorem}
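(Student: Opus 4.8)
The plan is to first collapse all four optimality notions into a single optimization task. By Proposition~\ref{pro:notions}, any assignment that maximizes social welfare is simultaneously \WO, \PO, \UVO, and \GWO; hence a social-welfare maximizer is a valid output for $\phi$-\Schelling for \emph{every} $\phi \in \{\WO,\PO,\UVO,\GWO\}$. It therefore suffices to build one preprocessing procedure that preserves the maximum value of $\sw$ together with a maximizer, and then to solve the reduced instance for maximum welfare by brute force; lifting a maximizer of the kernel back to one of the original instance handles all four notions at once, matching the claim that a single kernel works uniformly.

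The structural observation driving everything is that for any assignment $\vbf$ the utility $u_i(\vbf)=f_i(\vbf)/(f_i(\vbf)+e_i(\vbf))$ depends only on how many of agent $i$'s \emph{occupied} neighbors are friends and enemies. Consequently the entire sorted utility vector, and in particular $\sw(\vbf)$, is determined by the colored graph induced on the occupied set $S$, which has at most $r+b$ vertices and maximum degree at most $\Delta$; moreover $\sw$ is additive over the connected components of $G[S]$. Thus maximizing welfare amounts to choosing $r+b$ vertices and a $2$-coloring with $r$ reds and $b$ blues so as to realize a family of vertex-disjoint, pairwise non-adjacent connected colored ``pieces'' of maximum total welfare. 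Crucially, empty vertices never contribute to the objective and matter only insofar as they can host agents, so the only reason to keep a vertex is its potential role in some optimal placement.

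The kernelization then deletes vertices that cannot affect the optimum, via reduction rules of two kinds together with a replacement (exchange) lemma. The count-based rules exploit that at most $r+b$ agents are ever placed: among vertices that play interchangeable local roles, only boundedly many need be retained, since no more than $r+b$ of them can be used simultaneously. The shrinking rules exploit bounded degree: an occupied set spans at most $r+b$ vertices whose closed neighborhoods total at most $(r+b)(\Delta+1)$ vertices, so any region too far from a potential optimal placement is redundant. The delicate ingredient is a replacement lemma asserting that whenever an optimal assignment uses a vertex marked for deletion, an equally good assignment exists entirely within the retained set, obtained by rerouting the corresponding agent onto a retained vertex of the same local role. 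The polynomial bound $\bigoh(\Delta^2 r^2 b^2)$ arises from a careful accounting that charges retained vertices to combinations of agents at color interfaces and to the bounded number of distinct local adjacency patterns an agent can exhibit (each having at most $\Delta$ neighbors). Finally, the kernel with $N=\bigoh(\Delta^2 r^2 b^2)$ vertices is solved by trying all $\bigoh(N^{r+b})$ placements, which is $f(r,b,\Delta)\cdot n^{\bigoh(1)}$ overall, placing $\phi$-\Schelling in \FPT.

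I expect the replacement step, in the presence of \emph{two} colors and the non-adjacency (scattering) requirement, to be the main obstacle. Because moving one agent simultaneously changes the friend/enemy counts of all of its neighbors, a naive swap can merge two pieces or convert a friendly edge into an enemy edge, raising some utilities while lowering others; one must prove that a welfare-non-decreasing reroute into the retained set always exists. Equally delicate is keeping the representative count polynomial: the number of possible connected piece shapes on $r+b$ vertices of degree $\le\Delta$ is exponential in $r+b$, so the argument must be genuinely local and must avoid enumerating piece types, and it is exactly this local accounting that pins the kernel size at the stated $\bigoh(\Delta^2 r^2 b^2)$ rather than an exponential in $r+b$.
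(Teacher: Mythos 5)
Your opening move matches the paper exactly: by Proposition~\ref{pro:notions} it suffices to compute a social-welfare maximizer, so one kernel serves all four notions. But from there your proposal is a plan rather than a proof, and the plan's load-bearing component is precisely the part you leave open. You propose reduction rules based on ``interchangeable local roles'' plus a replacement lemma that reroutes a single agent from a deleted vertex to a retained one, and you then correctly observe that this lemma is the main obstacle (moving one agent perturbs all its neighbors' friend/enemy counts, can merge pieces, etc.) and that bounding the number of representatives threatens to be exponential in $r+b$. Naming an obstacle is not overcoming it: as written, no reduction rule is specified, no notion of ``local role'' is defined, and no exchange argument is proved, so the kernel bound \(\bigoh(\Delta^2\cdot r^2\cdot b^2)\) is asserted rather than derived. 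This is a genuine gap.

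The paper's proof avoids your obstacle entirely by arguing at the level of whole connected components rather than individual vertices, in two steps. First, if the largest component \(C_1\) has at least \((\Delta+1)\cdot r\cdot(1+\Delta\cdot b)\) vertices, one can directly \emph{construct} an optimal assignment: place the red agents on an arbitrary connected set \(X\subseteq C_1\) of size \(r\); since \(|N[X]|\le(\Delta+1)r\), the graph \(G[C_1\setminus N[X]]\) has at most \(\Delta(\Delta+1)r\) components, so by pigeonhole one of them has at least \(b\) vertices, and placing the blue agents connectedly there yields \(\sw(\vbf)=r+b\), which is the maximum possible — no kernelization is even needed in this case. Second, if all components are smaller than this bound, an exchange argument shows some welfare maximizer touches only the \(r+b\) largest components: if a component \(C_q\) with \(q>r+b\) is used, move \emph{all} red agents in \(C_q\) together to a connected subgraph of an entirely unoccupied larger component (one exists among the first \(r+b\)), then do the same for the blue agents of \(C_q\). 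Because an entire color class of a component is relocated en bloc into a fresh component, the moved agents become a monochromatic connected cluster (utility only improves), the agents left behind in \(C_q\) only lose enemies, and no other agent's neighborhood changes — exactly the interference problem that defeats your per-agent reroute simply cannot arise. Deleting the remaining components leaves at most \(r+b\) components of size \(\bigoh(\Delta^2\cdot r\cdot b)\) each, giving the \(\bigoh(\Delta^2\cdot r^2\cdot b^2)\) kernel, which is then solved by the brute-force enumeration you describe. To repair your write-up you would need to replace the unproven local-role machinery with component-level rules of this kind (or actually prove your replacement lemma, which in the stated per-agent form appears false without further restrictions, since a reroute can strictly decrease the utilities of the neighbors it abandons or acquires).
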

\ifshort
\begin{proof}[Proof Sketch]
\fi 
\iflong 
\begin{proof}
\fi 
    Let \tuple{G,A} be a \Schelling instance. Let \(C_1,\ldots, C_k\) be the connected components of $G$ such that $|C_1|\ge |C_2|\ge \cdots \ge |C_k|$. We will prove the theorem in two steps. First we show that if $|C_1|\ge (\Delta + 1)\cdot r\cdot (1 + \Delta\cdot b)$, then we can construct an assignment $\vbf$ such that \(\sw(\vbf)=r+b\) in polynomial time. Afterwards, we show that there always exists a solution that maximizes social welfare that does not intersect any of the components $C_{r+b+1},\ldots, C_k$ and if $f(r+b+\Delta)\ge |C_1|\ge |C_2|\ge \ldots |C_{r+b}|$, we can find a solution that maximizes social welfare in \FPT time; for example by trying all possible assignments that assign all agents to the components $C_1, C_2,\ldots,C_{r+b}$. 
    
    Assume that $|C_1|\ge (\Delta + 1)\cdot r\cdot (1 + \Delta\cdot b)$ and let us pick an arbitrary set $X\subseteq C_1$ such that $G[X]$ is connected and $|X|=r$. 
    We assign all red agents to the vertices in $X$. Now \(|N[X]|\le (\Delta + 1)r\), where $N[X]$ is the closed neighborhood of $X$, and \(|N(N[X])|\le \Delta(\Delta + 1)r\). However, every connected component of $G[C_1\setminus N[X]]$ contains a vertex with a neighbor in $N[X]$. Hence, $G[C_1\setminus N[X]]$ has at most \(\Delta(\Delta + 1)r\) many connected components and if  $|C_1|\ge (\Delta + 1)r(1 + \Delta\cdot b)$ it follows that $|C_1\setminus N[N[X]]|\ge (\Delta + 1)\cdot r\cdot \Delta\cdot b$ and by the pigeonhole principle at least one of the connected components of $G[C_1\setminus N[X]]$ has $b$ vertices. We can assign all blue agents to a connected subgraph of such a component. Let $\vbf$ be the assignment we obtained above. Since no blue agent is assigned to \(N(X)\) and $X$ is connected, we get \(\sw_R(\vbf)=r\) whenever \(r\ge 2\) and since, in addition, all blue agents are also assigned to a connected subgraph of $G$, we get  \(\sw_B(\vbf)=b\) whenever \(b\ge 2\). Hence from now on we can assume that all connected components of $G$ have size at most \((\Delta + 1)\cdot r\cdot (1 + \Delta\cdot b)\). 
    
    It remains to show that we can focus on the $r+b$ largest components. 
    \ifshort
    We can observe that if agents of some type are assigned to some component $C_q$, $q > r+b$, then we can reassign all of them to some component $C_{p}$, $p\le r+b$, such that no vertex of $C_p$ is occupied by any agent so far. Moreover, this operation does not decrease the social welfare. 
    \fi
    \iflong Let \(\vbf\) be an assignment with the maximum social welfare and among all assignments that achieve the maximum social welfare, let \(\vbf\) be such assignment of agents that minimizes the maximum index of the connected component that contains a vertex that is assigned to some agent by \(\vbf\). That is \(\vbf\) minimizes \(\max \{q \mid v(i)\in C_q\text{ for some }i\in A\}\). We show by contradiction that \(\vbf\) assigns only agents in $C_1,\ldots, C_{r+b}$. 
    Let $q\in [k]$ be the maximum index such that $C_q$ is assigned some agents. If $q\le r+b$, we are done. Assume that $q>r+b$. Now let $R_q$ be all the red agents assigned to $C_q$ and $B_q$ be all the blue agents assigned to $C_q$. If $R_q$ is not empty, then let $p$ be the minimum index such that $C_p$ is not assigned any agent by $\vbf$. Since there are $r+b$ agents in total and $C_q$, $q> r+b$ is assigned some agents, it follows that $p\le r+b$. Since $|C_1|\ge |C_2|\ge \cdots \ge |C_k|$ and $|C_q|\ge |R_q|$, it follows that $|C_p|\ge |R_q|$ and we can create an assignment \(\vbf_1\) that assigns all agents in $R_q$ to a connected subgraph of $G[C_p]$ and all the remaining agents the same way as $\vbf$. It is readily seen that $\sw(\vbf_1)\ge \sw(\vbf)$. Similarly, if $B_q$ is not empty, we let $\ell$ to be the minimum index such that $C_\ell$ is not assigned any agent by $\vbf_1$ (if $R_q$ was empty then $\vbf_1=\vbf$). Note that since still the agents in $B_q$ are assigned to $C_q$, $q\ge r+b$, it follows that $\ell \le r+b$ and we can assign all agents to a connected subgraph of \(G[C_\ell]\). Let $\vbf_2$ be the final assignment. Again, it is easy to see that $\sw(\vbf_2)\ge \sw(\vbf_1)\ge \sw(\vbf)$, as the utilities of all agents outside $B_q$ do not change and the utilities of agents in $B_q$ can only increase. However, the maximum index of a connected component of $G$ that is assigned by $\vbf_2$ is strictly smaller than $q$. This is a contradiction with the choice of $\vbf$ and we conclude that $q\le r+b$. 
    
    \fi 
    Hence, we can remove all connected components $C_q$, $q>r+b$, from the instance and we are left with at most $r+b$ components, each with at most \(\mathcal{O}(\Delta^2\cdot r\cdot b)\) many vertices. This we can solve in time \(\mathcal{O}((\Delta^2\cdot r\cdot b\cdot (r+b))^{r+b})\) by enumerating all possible assignments of the agents. 
\end{proof}

\section{Multiple types}
In this section we depart from the standard model and we study Schelling instances with multiple types, denoted \SchellingM. We show that \perfect-\SchellingM is \W{1}-hard when parameterized by agent-types. Our reduction shows that the intractability \iflong of the problem \fi holds for a variety of graphs. In fact, it reveals that the sizes of the connected components of the graph are sufficient for proving hardness, without depending on the internal structure of every component. We will prove our result in two steps. In the first step we will prove that the problem is hard even when $G$ is a collection of connected components with arbitrary structure. Then, we will show how to get hardness even when $G$ is a tree.

\iflong
\begin{theorem}
\fi
\ifshort
\begin{theorem}[$\star$]
\fi
\label{thm:wh-k-types}
Let \(\mathcal{G}\) be an arbitrary class of connected graphs that contains at least one graph of size \(s\) for every \(s\in \mathbb{N}\). \perfect-\SchellingM is \NP-hard and \W{1}-hard when parameterized by agent-types, even when every connected component of $G$ is in \(\mathcal{G}\).
\end{theorem}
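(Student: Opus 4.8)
The plan is to reduce from \UBP (Unary Bin Packing), which is \NP-hard and, crucially, \W{1}-hard when parameterized by the number of bins. An instance consists of $n$ items with positive integer sizes $s_1,\dots,s_n$ given in unary, a number of bins $k$, and a capacity $B$, and the question is whether the items can be distributed among the $k$ bins so that no bin exceeds capacity $B$. I would first normalize the instance so that $\sum_{i=1}^n s_i = kB$: in this regime ``no bin exceeds $B$'' forces every bin to be filled \emph{exactly} to $B$, so a feasible packing is precisely a partition of the items into $k$ groups of total size $B$ each. By doubling every $s_i$ and $B$ I may additionally assume $s_i \ge 2$ for all $i$, at only a factor-two cost in the (unary) encoding length; this will be needed to avoid isolated agents.

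Given such an instance, I would build a \perfect-\SchellingM instance \tuple{G,A} as follows. For each item $i$, pick a graph $H_i \in \mathcal{G}$ on exactly $s_i$ vertices, which exists because $\mathcal{G}$ contains a graph of every size, and let $G$ be the disjoint union $H_1 \uplus \dots \uplus H_n$; then the connected components of $G$ are exactly $H_1,\dots,H_n$, each lying in $\mathcal{G}$. Introduce $k$ agent-types and give type $c$ exactly $B$ agents, so that $|A| = kB = \sum_i s_i = |V|$. The parameter, the number of types, equals the number of bins $k$, which is what gives the \W{1}-hardness (and the \NP-hardness follows from the strong, i.e. unary, \NP-hardness of the source problem).

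The heart of the argument is the equivalence between feasible packings and perfect assignments, and the key structural fact driving it is that $|A| = |V|$, so \emph{every} perfect assignment occupies every vertex. For the forward direction, from a packing I place all $B$ agents of type $c$ on the components whose items lie in bin $c$; each such component is then monochromatic, connected, and of size $\ge 2$, so every agent sees only friends and at least one neighbor and thus has utility $1$. For the backward direction I start from a perfect assignment. Since every vertex is occupied and each $H_i$ is connected, if two adjacent vertices of some $H_i$ carried agents of different types then one of them would have an enemy neighbor and hence utility strictly below $1$; therefore each component is monochromatic. Consequently the vertices are partitioned by type, and because type $c$ has exactly $B$ agents the components it occupies have total size $B$, so reading those components back as items yields a packing into $k$ bins of capacity $B$.

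The only real obstacle is the backward direction, specifically the step that perfection forces each component to be monochromatic. This is exactly where $|A| = |V|$ is indispensable: without it a component could be partially filled, breaking both the monochromaticity argument and the size accounting, and a partially filled component could even leave an agent isolated with utility $0$; this is precisely why I normalize \UBP to $\sum_i s_i = kB$ and to sizes $\ge 2$ beforehand. Everything else is routine. I would emphasize that the argument never inspects the internal structure of any $H_i$ and uses only that it is connected and has the prescribed size, which is exactly why the hardness holds for every class $\mathcal{G}$ of connected graphs realizing all sizes, trees (hence forests) included.
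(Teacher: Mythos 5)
Your proposal is correct and takes essentially the same route as the paper's own proof: a reduction from \UBP parameterized by the number of bins, normalized so that $\sum_i s_i = kB$ with all item sizes at least $2$ (the paper does this by making $B$ and every $s_i$ even and adding dummy items of size $2$), one component of $\mathcal{G}$ of size $s_i$ per item, and $B$ agents per type. Your explicit emphasis that $|A|=|V|$ is what forces each component of a \perfect assignment to be monochromatic is exactly the (implicit) crux of the paper's backward direction, so there is nothing to add.
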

\iflong
\begin{proof}
\fi
\ifshort
\begin{proof}[Proof sketch]
\fi
We will prove our result via a reduction from \UBP. An instance of \UBP consists of a set $I$ of items, where every item $i \in I$ has a positive integer size $s_i$ given in unary, and $k$ bins of size $B$ each. The task is to decide if there is a partition of the items into $k$ subsets $I_1, I_2, \ldots, I_k$ such that the size of each subset is \iflong at most \fi \ifshort exactly \fi $B$\iflong , i.e., $\sum_{i \in I_j}s_i \leq B$ for every $j \in [k]$\fi . \UBP is \W{1}-hard parameterized by the number of bins $k$~\cite{JansenKMS13}. 
\iflong
In what follows, we will apply the following standard assumptions about the considered instances of \UBP:
\begin{itemize}
	\item $|I|> k$ (otherwise there is a trivial answer);
	\item $B$ and $s_i$ for every $i \in I$ are even integers (we can multiply $B$ and every $s_i$ without changing the answer to the decision question);
	\item $s_i \leq B-2$ for every item $i \in I$ (if there exists an item such that $s_i=B$, we remove it and decrease $k$ by one);
	\item $\sum_i s_i = k\cdot B$ (we can add dummy items of size 2 without changing the answer to the decision question).
\end{itemize} 
Hence the  goal is to decide whether there is a partition of the items into $k$ subsets such that the size of each subset~is~$B$.
\fi

Given an instance of \UBP, we will create an instance of \SchellingM with $k$ types of agents, where for each type there are $B$ agents, hence there are $k\cdot B$ agents in total. The graph $G$ will be the union of the graphs $G_1, G_2, \ldots, G_{|I|}$, where $G_i$ is isomorphic to a connected graph in \(\mathcal{G}\) with $s_i$ vertices. We will ask if there is a perfect assignment for \tuple{G,A}.

\ifshort
 The correctness is proven by observing that in any perfect partition the agents are neighbors only to agents of the same type, and hence any connected component contains only agents of a specific type.
 \fi
 \iflong
 So, assume that $(I_1, I_2, \ldots, I_k)$ is a solution of the instance of \UBP. We create an assignment \vbf for \tuple{G,A} as follows. If item $i \in I_j$, then in every vertex of $G_i$ we place an agent of type $j$. Observe that for every $j \in [k]$ we have that $\sum_{i \in I_j}s_i = B$; thus we have placed $B$ agents of each type on $G$. In addition, observe that for every $i \in I$ every graph $G_i$ has at least two vertices and all of its vertices are occupied by agents of the same type. Thus, every agent in every graph has only neighbors of the the same type and every agent has at least one neighbor since $G_i$ is connected. Hence, every agent gets utility 1 under \vbf and it is a perfect assignment.

 In order to prove the other direction, assume that there exists a perfect assignment \vbf for the instance \tuple{G,A} of \perfect-\SchellingM. Since \vbf is perfect, every agent has utility 1 under \vbf. This means for every type $j \in [k]$ and every agent $i$ of this type, $N_i(\vbf)$ contains only agents of type $j$. Thus, for every $i \in I$ the connected component $G_i$ contains only agents of the same type. So, we create a partition $I_1, I_2, \ldots, I_k$ as follows. If $G_i$ contains agents of type $j \in [k]$, then we put item $i$ in $I_j$. Recall that there are $B$ agents of type $j$ and all of them get utility 1 under \vbf. Thus, we have that $\sum_{i \in I_j}s_i = B$ for every $j \in [k]$, which is a solution for the original instance of \UBP.
\fi
\end{proof}

As a corollary of Theorem~\ref{thm:wh-k-types} we can prove that the problem remains \W{1}-hard even when $G$ is a tree.
\iflong
\begin{corollary}
\fi
\ifshort
\begin{corollary}[$\star$]
\fi
\label{cor:wh-types-tree}
\perfect-\SchellingM is \NP-hard and \W{1}-hard when parameterized by agent-types, even if $G$ is a tree.
\end{corollary}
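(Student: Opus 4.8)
The plan is to invoke Theorem~\ref{thm:wh-k-types} with a concrete tree-class and then splice the resulting components together into a single tree without disturbing the reduction. Concretely, take \(\mathcal{G}\) to be the class of all paths: for every \(s\in\mathbb{N}\) the path \(P_s\) is a connected graph (indeed a tree) on \(s\) vertices, so \(\mathcal{G}\) satisfies the hypothesis of Theorem~\ref{thm:wh-k-types}. Running that reduction on a \UBP instance therefore produces a graph \(G\) that is the disjoint union of paths \(G_1,\dots,G_{|I|}\) with \(|V(G_i)|=s_i\ge 2\), together with \(k\) types of \(B\) agents each, such that \(G\) admits a perfect assignment iff the \UBP instance is a yes-instance. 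The only obstacle to the corollary is that this \(G\) is a forest rather than a single tree.

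To fix this I would add one fresh vertex \(c\) and join it by an edge to one endpoint \(a_i\) of each path \(G_i\), obtaining a graph \(G'\). Since each \(G_i\) is a tree and is attached to \(c\) by a single edge, \(G'\) is connected and acyclic, i.e.\ a tree. I keep the same \(k\) types and the same multiset of \(k\cdot B\) agents, assigning \emph{no} agent to \(c\). Because \(|V(G')|=|A|+1\), every assignment of \tuple{G',A} leaves exactly one vertex empty, and this counting is the crux of the argument.

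I would then show that \tuple{G',A} has a perfect assignment iff \(G\) does. For the easy direction, extend a perfect assignment of the forest \(G\) by leaving \(c\) empty; since only occupied vertices enter \(N_i(\vbf)\) and hence \(u_i\), all utilities stay \(1\). For the converse, take a perfect assignment \(\vbf\) of \(G'\) and distinguish whether \(c\) is empty. If \(c\) is empty, the occupied subgraph is exactly the union of the fully occupied paths, each \(G_i\) is a monochromatic occupied component, and restricting \(\vbf\) to \(G\) is perfect. The real work is ruling out the case that \(c\) carries an agent of some type \(t\): then every occupied neighbor \(a_i\) of \(c\) must have type \(t\), and since only one vertex of \(G'\) is empty, all but possibly one of the paths are fully occupied and, being connected to \(c\) through \(a_i\), lie entirely in type \(t\); the single empty vertex lies in one gadget \(G_{i_0}\) and can detach only a subpath of fewer than \(s_{i_0}\le B-2\) vertices of a different type. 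Hence strictly fewer than \(B\) agents can have a type other than \(t\), contradicting the fact that for \(k\ge 2\) some other type is used by exactly \(B\) agents. Thus \(c\) must be empty, completing the equivalence.

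Finally I would observe that the transformation \(G\mapsto G'\) is polynomial and leaves the number of types \(k\) (the parameter) unchanged, so the \NP-hardness and \W{1}-hardness of \perfect-\SchellingM transfer to the case where \(G\) is a tree. The main difficulty I anticipate is precisely this backward direction: forbidding an agent on the connector vertex \(c\). The argument hinges on the \UBP normalization \(s_i\le B-2\) together with there being at least two types, which guarantee that the single gadget containing the unique empty vertex cannot absorb a whole bin's worth of a type distinct from \(t\).
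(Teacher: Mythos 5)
Your proposal is correct and follows essentially the same route as the paper's proof: instantiate Theorem~\ref{thm:wh-k-types} with paths as the component class, attach a single fresh connector vertex (the paper's $v_0$, your $c$) to one endpoint of every path, and argue that in any perfect assignment the connector must be empty, so the construction reduces back to the forest case. Your explicit counting for the backward direction (using $s_i \le B-2$, $\sum_i s_i = k\cdot B$, and $k\ge 2$ to rule out an agent on $c$) is a careful elaboration of the step the paper dispatches in a single sentence, not a different argument.
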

\iflong
\begin{proof}
Our starting point is the construction from the proof of Theorem~\ref{thm:wh-k-types}. This time though, we set every connected component $G_i$ to be a path. In addition, we add a vertex $v_0$ and we connect it with one endpoint of every path $G_i$. The correctness of the construction is verbatim to the one of Theorem~\ref{thm:wh-k-types}. The soundness follows from the observation that in any assignment \vbf with social welfare $k\cdot B$ vertex $v_0$ should be empty. If it is not empty, then there exists at least one agent with agents of different type in his neighborhood. Thus his utility under \vbf is strictly less than 1.
\end{proof}
\fi

Again, using Proposition~\ref{pro:notions} and Observation~\ref{obs:perfect}, we can get the following corollary.
\begin{corollary}
\label{cor:trees-notper}
For every $\phi \in \{\WO, \PO, \UVO, \GWO\}$, $\phi$-\SchellingM is \NP-hard and \W{1}-hard when parameterized by agent-types, even when $G$ is a tree.
\end{corollary}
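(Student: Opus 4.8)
The plan is to give a polynomial-time, parameter-preserving (Turing) reduction from \perfect-\SchellingM restricted to trees, whose \NP-hardness and \W{1}-hardness parameterized by the number of agent-types is already established in Corollary~\ref{cor:wh-types-tree}. The reduction would leave the instance untouched: given a tree instance \tuple{G,A} of \perfect-\SchellingM, I would feed exactly the same instance to a hypothetical algorithm for $\phi$-\SchellingM and then inspect its output. The whole argument rests on combining the known hardness of the perfect case with the structural relationships between the notions.

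First I would invoke Proposition~\ref{pro:notions}: for every $\phi \in \{\WO, \UVO, \GWO\}$, any $\phi$-optimal assignment is automatically \PO, while for $\phi = \PO$ this is immediate. Hence, regardless of which of the four notions we target, the assignment \vbf returned by the $\phi$-\SchellingM algorithm is guaranteed to be \PO. The crucial step then uses Observation~\ref{obs:perfect}: if \tuple{G,A} admits a perfect assignment at all, then every \PO assignment, and in particular \vbf, must itself be perfect.

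Putting these together, the decision procedure is: compute \vbf via the $\phi$-\SchellingM algorithm, then check in polynomial time whether every agent receives utility $1$ under \vbf. If \vbf is perfect we answer \emph{yes}; otherwise we answer \emph{no}. Correctness holds in both directions: if a perfect assignment exists, \vbf is perfect by Observation~\ref{obs:perfect} and we answer \emph{yes}; if none exists, \vbf cannot possibly be perfect and we answer \emph{no}. Because the instance, and in particular the number of agent-types, is passed through unchanged and the perfection check is polynomial, this is a parameter-preserving reduction that transfers both the \NP-hardness and the \W{1}-hardness of \perfect-\SchellingM on trees to $\phi$-\SchellingM on trees for every $\phi \in \{\WO, \PO, \UVO, \GWO\}$.

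The only point requiring care, and the mild conceptual obstacle, is that $\phi$-\SchellingM is phrased as a search problem rather than a decision problem, so the argument is genuinely a Turing reduction: solving the optimization task and appending a cheap verification suffices to decide the existence of a perfect assignment. Once this framing is fixed, no further calculation is needed, since all the substantive work is already done by Corollary~\ref{cor:wh-types-tree}, Proposition~\ref{pro:notions}, and Observation~\ref{obs:perfect}.
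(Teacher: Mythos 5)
Your proposal is correct and matches the paper's intended argument exactly: the paper derives Corollary~\ref{cor:trees-notper} from Corollary~\ref{cor:wh-types-tree} by invoking Proposition~\ref{pro:notions} and Observation~\ref{obs:perfect}, precisely the combination you use. You have merely spelled out in full (including the search-to-decision Turing-reduction framing) what the paper states in a single sentence.
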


In the rest of this section, we give an algorithm for $\phi$-\SchellingM that matches the lower bound from Corollary~\ref{cor:trees-notper}. Namely, we give an XP algorithm for the problem parameterized by the number of agent-types and the treewidth of the graph. Recall, that trees have treewidth one and hence \SchellingM does not admit an FPT algorithm parameterized by agent-types and treewidth unless $\FPT=\W{1}$. Hence, the rest of the section is devoted to the proof of the following theorem. 

\begin{theorem}\label{thm:tw_algorithm}
There is an \(|A|^{\bigoh(k\cdot\tw(G))}\cdot |V(G)|\) time algorithm for $\phi$-\SchellingM, $\phi \in \{\WO, \PO, \GWO, \UVO\}$, where \(k\) is the number of agent-types.
\end{theorem}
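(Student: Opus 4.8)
The plan is to collapse all four notions into a single optimization. By Proposition~\ref{pro:notions} every \WO assignment is simultaneously \PO, \UVO, and \GWO, and a \WO assignment always exists because there are only finitely many assignments (one exists since $|A|\le|V|$). Hence it suffices to compute a single assignment of maximum social welfare: the very same assignment is a valid output of $\phi$-\SchellingM for every $\phi\in\{\WO,\PO,\GWO,\UVO\}$. This reduces the four problems to the scalar task of maximizing $\sw(\vbf)$, which I would solve by dynamic programming along a nice tree-decomposition.

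First I would invoke Proposition~\ref{fact:findtw} to obtain a nice tree-decomposition $(T,\chi)$ of width $w=\bigoh(\tw(G))$ with $\bigoh(|V(G)|)$ nodes, and then process it bottom-up. The state at a node $t$ records: (i) a signature $\alpha\colon\chi(t)\to\{0,1,\ldots,k\}$ telling, for each bag vertex, whether it is empty or occupied and by which type; (ii) for each occupied $v\in\chi(t)$, the vector $(c_v(1),\ldots,c_v(k))$ counting how many already-introduced occupied neighbours of each type $v$ has in $G_t$; and (iii) the vector $(m_1,\ldots,m_k)$ of how many agents of each type have been placed in $G_t$. For each state I store the maximum \emph{finalized} social welfare, i.e. the largest sum of utilities over vertices already forgotten. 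The separator property of tree-decompositions is what makes this work: once a vertex $v$ is forgotten, all of its neighbours have already been introduced, so at the corresponding forget node its final utility $c_v(\alpha(v))/\sum_j c_v(j)$ (taken to be $0$ when $v$ has no occupied neighbour) is completely determined and can be added to the stored welfare. Introduce nodes branch over the type (or emptiness) of the new vertex and update the affected counts together with $(m_1,\ldots,m_k)$; leaf and join nodes follow the textbook template; and at the root I read off the optimum from the unique state whose placement vector equals $(n_1,\ldots,n_k)$, reconstructing a maximizer by back-tracking.

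For the running time there are $(k+1)^{w+1}$ signatures, at most $(|A|+1)^{k(w+1)}$ choices of the neighbour-count vectors over the bag, and at most $(|A|+1)^{k}$ placement vectors, so the number of states per node is $|A|^{\bigoh(k\cdot w)}$. Processing a node, including the pairwise combination of the two child tables at a join node, costs a further factor polynomial (at worst quadratic) in the number of states, and there are $\bigoh(|V(G)|)$ nodes, which yields the claimed $|A|^{\bigoh(k\cdot\tw(G))}\cdot|V(G)|$ bound.

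I expect the main obstacle to be the bookkeeping of the neighbour counts rather than any conceptual difficulty. At a join node the two children share the bag $\chi(t)$ together with all edges inside it, so naively forming $c_v^{(1)}+c_v^{(2)}$ double-counts the occupied neighbours of $v$ lying in $\chi(t)$; the correct combination must subtract this bag-internal contribution (which is determined by $\alpha$ and $G[\chi(t)]$) exactly once, and one must fix a uniform convention at introduce nodes so that every edge contributes to the counts precisely one time. Verifying that, under this convention, the value computed at each forget node equals the true utility of that vertex in the final assignment is the crux of the correctness proof; the remainder is the standard dynamic-programming argument over a nice tree-decomposition.
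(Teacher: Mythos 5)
Your proposal is correct and follows essentially the same route as the paper: both reduce all four notions to computing a maximum social-welfare assignment via Proposition~\ref{pro:notions}, then run a bottom-up dynamic program over a nice tree-decomposition whose states record exactly the bag-vertex types, per-vertex per-type neighbour counts, and per-type placement counts (the paper's \(\Types\), \(\NeighborhoodSizes\), and \(\Sizes\)), storing only the welfare already finalized at forgotten vertices. The double-counting subtlety you flag at join nodes is handled identically in the paper's Lemma~\ref{lem:join_node}, which subtracts the bag-internal contributions \(|S_i|\) and \(|N^i_{\chi(t)}(v)|\) exactly as you propose.
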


\noindent
Let \tuple{G,A} be an instance of $\phi$-\SchellingM with \(k\) agent-types, where \(G\) has treewidth at most \(w\). For a type \(i\in [k]\), we let \(A_i\) denote the set of all agents of type \(i\). Moreover, let \(\mathcal{T}=(T,\chi)\) be a tree-decomposition of \(G\) of width at most \(w\). Note that every assignment that is \WO is also $\phi$ by Proposition~\ref{pro:notions}, hence regardless of \(\phi\) we can compute a \WO assignment. 
The algorithm is a standard bottom-up dynamic programming along a nice tree-decomposition.
As always, the main challenge is to decide what records we should keep for
each node $t$ of $T$, where each record models an equivalence class of
partial assignments for the sub-instance induced by the vertices in $G_t$,
i.e., the graph $G$ induced by all vertices contained in bags in the subtree rooted at $t$. 

Consider some node \(t\in V(T)\). We would like to compute a table \(\Gamma_t\), where each entry of the table corresponds to some equivalence class 
of partial assignments and the value of that entry is the "best" partial assignment in the equivalence class.

For the algorithm to be efficient, we need the number of equivalence classes to be small and we should be able to compute each entry in \(\Gamma_t\) efficiently from the tables for the children of \(t\). First let us formally define a partial assignment. A \emph{partial assignment over the subset of agents $A'\subseteq A$} is an assignments \(\vbf_{A'} = (v(a_1),v(a_2),\ldots, v(a_{|A'|}))\), where $A' = \{a_1,a_2, \ldots, a_{|A'|}\}$. A \emph{partial assignment} is then an assignment over some subset of agents. In the node $t$, we are interested in the partial assignment that we can obtain by taking a (full) assignment and restricting it to the agents that are assigned some vertex in $G_t$. For a partial assignment \(\vbf\) we let \(\sw^t(\vbf)=(\sum_{i\in A'}u_i(\vbf))\), where \(A'\subseteq A\) is the set of agents assigned to vertices in \(G_t-\chi(t)\) by \(\vbf\). Note that \(\sw^t(\vbf)\) does not contain the utilities of vertices in \(\chi\), that is because the utilities of these vertices still depend on the vertices in \(V(G)\setminus V(G_t)\) and might be different in an assignment \(\vbf'\) whose restriction to \(G_t\) results in \(\vbf\). 

A \emph{(description of) equivalence class } \(\mathcal{C}\) for the node $t$ is a tuple \(\tuple{\Sizes,\Types,\NeighborhoodSizes}\), where 
\begin{itemize}
    \item \(\Sizes\colon [k]\rightarrow \mathbb{N}\) such that \(0\le \Sizes(i)\le |A_i|\) for all \(i\in [k]\),
    \item \(\Types\colon\chi(t)\rightarrow \{0,\ldots, k\}\), and 
    \item \(\NeighborhoodSizes\colon \chi(t)\rightarrow \mathbb{N}^k\) such that for every vertex \(v\in \chi(t)\), if \(\NeighborhoodSizes(v) = \tuple{n_1,n_2,\ldots, n_k}\), then \(0\le n_i\le |A_i|\) for all \(i\in [k]\).
\end{itemize}

We say that a partial assignment \(\vbf\) \emph{belongs} to the equivalence \(\mathcal{C}=\tuple{\Sizes,\Types,\NeighborhoodSizes}\) for the node $t$ if and only if:
\begin{itemize}
    \item \(\vbf\) assigns agents only to vertices in \(G_t\),
    \item for all \(i\in [k]\), the number of agents of type \(i\) assigned a vertex in $G_t$ by \(\vbf\) is \(\Sizes(i)\), 
    \item for all \(v\in \chi(t)\), if \(\Types(v) = 0\), then \(\vbf\) does not assign any agent to the vertex \(v\), else \(\vbf\) assigns some agent of type \(\Types(v)\) to \(v\), and
    \item for all \(v\in \chi(t)\) and all \(i\in [k]\), the number of neighbors of $v$ in $G_t$ assigned an agent of type $i$ is equal to \(\NeighborhoodSizes(v)[i]\).
\end{itemize}
It is easy to see that every partial assignment belongs to some equivalence class. 

We say that an equivalence class is \emph{valid} if there exists a partial assignment that belongs to the equivalence class. 
For a valid equivalence class \(\mathcal{C}\), the table entry \(\Gamma_t[\mathcal{C}]\) should contain
some partial assignment \(\vbf\) that belongs to \(\mathcal{C}\) and maximizes $\sw^t(\vbf)$.

Moreover, recall that for the root node $r$ of $T$, we have $G_r=G$ and \(\chi_v=\emptyset\), hence $\sw^r(\vbf)= \sw(\vbf)$ and the table entry \(\Gamma_r[\mathcal{C}_r]\) for the class \(\mathcal{C}_r=\tuple{\Sizes,\Types,\NeighborhoodSizes}\), 
where for all \(i\in [k]\) we have \(\Sizes(i)=|A_i|\) and \(\Types\) and \(\NeighborhoodSizes\) are empty functions, contains an assignment that maximizes 
the social welfare. First, let us observe that the number of entries in each node is bounded. 

\iflong
\begin{observation}
\fi 
\ifshort 
\begin{observation}[$\star$]
\fi 
\label{obs:No_eq_classes}
The number of equivalence classes for node $t$, $t\in V(T)$, is at most \((|A|+1)^{k(1+|\chi(t)|)}\cdot (k+1)^{|\chi(t)|}\).
\end{observation}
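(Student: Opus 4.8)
The plan is to observe that every equivalence class for $t$ is, by definition, uniquely described by a tuple $\tuple{\Sizes,\Types,\NeighborhoodSizes}$, so it suffices to bound the number of such tuples. I would do this by counting the admissible choices for each of the three components independently and multiplying. There is no genuine difficulty here; the argument is a routine product count, and the only thing to be careful about is to use the crude bound $|A_i|\le |A|$ uniformly, so that the per-type counts collapse into the clean exponents claimed. The resulting estimate will be an overcount, as it also counts tuples describing empty (invalid) classes and treats $|A_i|$ as if it were $|A|$, but since we only need an upper bound this is harmless.

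First I would count the functions $\Sizes$. Since $\Sizes(i)$ may independently take any value in $\{0,1,\ldots,|A_i|\}$ for each $i\in[k]$, there are $\prod_{i\in[k]}(|A_i|+1)\le (|A|+1)^k$ possibilities, where the inequality uses $|A_i|\le|A|$. Next, the functions $\Types\colon\chi(t)\to\{0,\ldots,k\}$ assign to each of the $|\chi(t)|$ vertices one of the $k+1$ values in $\{0,\ldots,k\}$, giving exactly $(k+1)^{|\chi(t)|}$ possibilities.

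Finally, for $\NeighborhoodSizes$, each vertex $v\in\chi(t)$ is assigned a vector $\tuple{n_1,\ldots,n_k}$ whose $i$-th coordinate lies in $\{0,\ldots,|A_i|\}$; hence there are at most $\prod_{i\in[k]}(|A_i|+1)\le(|A|+1)^k$ choices per vertex, and therefore at most $\bigl((|A|+1)^k\bigr)^{|\chi(t)|}=(|A|+1)^{k|\chi(t)|}$ choices over all vertices of $\chi(t)$. Multiplying the three bounds yields
\[
(|A|+1)^k\cdot (k+1)^{|\chi(t)|}\cdot (|A|+1)^{k|\chi(t)|} = (|A|+1)^{k(1+|\chi(t)|)}\cdot(k+1)^{|\chi(t)|},
\]
which is precisely the claimed bound. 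The main (and only) thing to watch is keeping the exponents aligned when combining $\Sizes$ with $\NeighborhoodSizes$, so that the factor of $(|A|+1)^k$ from $\Sizes$ and the factor $(|A|+1)^{k|\chi(t)|}$ from $\NeighborhoodSizes$ merge into $(|A|+1)^{k(1+|\chi(t)|)}$.
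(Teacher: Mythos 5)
Your proposal is correct and follows exactly the paper's own argument: a product count of the admissible choices for $\Sizes$, $\Types$, and $\NeighborhoodSizes$, using the crude bound $|A_i|\le|A|$ to collapse the per-type factors into $(|A|+1)^k$ per function (the paper's proof is a more terse version of the same computation). Your write-up is if anything slightly more careful, as it explicitly notes the overcounting of invalid classes is harmless for an upper bound.
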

\iflong 
\begin{proof}
    For every \(i\in [k]\) there are \(|A_i|+1\) possibilities for the value of \(\Sizes(i)\), hence there are at most \(\prod_{i\in k}(|A_i|+1)\le (|A|+1)^k\) many possibilities for the function \(\Sizes\). Moreover, for each vertex in \(\chi(v)\), we have at most $k+1$ possibilities for \(\Types(i)\) and at most \(\prod_{i\in k}(|A_i|+1)\le (|A|+1)^k\) many possibilities for \(\NeighborhoodSizes(i)\) and the observation follows. 
\end{proof}
\fi 

It follows from Observation~\ref{obs:No_eq_classes} that it suffices to show that for each node $t$ and each equivalence class \(\mathcal{C}\) for the node $t$, we can in time \(|A|^{\bigoh(k\cdot\tw(G))}\) decide whether \(\mathcal{C}\) is valid and if so find the partial assignment \(\vbf\) that belongs to \(\mathcal{C}\) and maximizes \(\sw^t(\vbf)\).
\ifshort
\begin{lemma}[$\star$]\label{lem:all_nodes}
Let $t\in V(T)$ be a node 
and \(\mathcal{C} = \tuple{\Sizes,\Types,\NeighborhoodSizes}\) an equivalence class for $t$.
Then, assuming that we computed all entries for the children of $t$ in $T$,  we can in \(|A|^{\bigoh(k\cdot |\chi(t)|)}\) time decide whether \(\mathcal{C}\) is valid and if so compute a partial assignment \(\vbf\) that belongs to \(\mathcal{C}\) and maximizes \(\sw^t(\vbf)\). 
\end{lemma}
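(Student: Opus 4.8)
The plan is to prove the lemma by a case analysis on the type of the node $t$ in the nice tree-decomposition, following the standard introduce/forget/join paradigm. Throughout, I adopt the convention that $\Gamma_t[\mathcal{C}]=-\infty$ whenever $\mathcal{C}$ is not valid, so that deciding validity and computing the optimal value are handled uniformly: $\mathcal{C}$ is valid precisely when the recurrence assigns it a finite value. Correctness of every case rests on three standard locality properties of tree-decompositions, which I invoke without reproving: (i) at an introduce node all neighbors in $G_t$ of the introduced vertex lie in $\chi(t)$; (ii) at a forget node every neighbor in $G$ of the forgotten vertex already belongs to $G_{t'}=G_t$; and (iii) at a join node $V(G_{t_1})\cap V(G_{t_2})=\chi(t)$ and there is no edge between $G_{t_1}-\chi(t)$ and $G_{t_2}-\chi(t)$. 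The base case is a leaf $\ell$, where $\chi(\ell)=\emptyset$ and $G_\ell$ is empty: the only valid class has $\Sizes\equiv 0$ with empty $\Types,\NeighborhoodSizes$, and value $0$.

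For an introduce node introducing $v$ with child $t'$, property (i) says the neighborhood of $v$ in $G_t$ is exactly its set of bag-neighbors, so I first reject $\mathcal{C}$ unless $\NeighborhoodSizes(v)[i]$ equals the number of $u\in\chi(t)$ with $uv\in E$ and $\Types(u)=i$, for every $i$. Otherwise the target class determines a unique candidate child class $\mathcal{C}'$: its $\Types$ is $\Types$ restricted to $\chi(t')$; its $\Sizes$ agrees with $\Sizes$ except that, if $\Types(v)=j\neq 0$, the count of type $j$ is decreased by one; and its $\NeighborhoodSizes$ agrees with $\NeighborhoodSizes$ on $\chi(t')$ except that the $j$-th coordinate of every bag-neighbor of $v$ is decreased by one when $v$ carries a type-$j$ agent. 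Since $G_t-\chi(t)$ and $G_{t'}-\chi(t')$ have the same vertex set, no new agent leaves the bag, $\sw^t$ is unchanged, and I set $\Gamma_t[\mathcal{C}]=\Gamma_{t'}[\mathcal{C}']$. A forget node forgetting $v$ is where utilities get finalized: by property (ii), the neighborhood of $v$ recorded in any child class is the complete $G$-neighborhood of $v$, so from $\Types'(v)$ and $\NeighborhoodSizes'(v)$ I compute the final utility $u_v$ exactly via the utility formula (zero if $v$ is empty or has no occupied neighbor, and $\NeighborhoodSizes'(v)[\Types'(v)]/\sum_i \NeighborhoodSizes'(v)[i]$ otherwise). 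As $G_t-\chi(t)$ gains exactly the vertex $v$ relative to $G_{t'}-\chi(t')$, I obtain $\Gamma_t[\mathcal{C}]$ by maximizing $\Gamma_{t'}[\mathcal{C}']+u_v$ over the at most $(k+1)(|A|+1)^k$ child classes $\mathcal{C}'$ that agree with $\mathcal{C}$ on $\Sizes$ and on $\Types,\NeighborhoodSizes$ restricted to $\chi(t)$, ranging over the free choice of $\Types'(v)$ and $\NeighborhoodSizes'(v)$.

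The join node is the main obstacle and the only place where genuine combination occurs. Here $\Types$ must coincide with $\Types_1$ and $\Types_2$, so I fix $\Types_1=\Types_2=\Types$. The two branches share exactly the bag, so both $\Sizes$ and $\NeighborhoodSizes$ double-count it and must be merged by inclusion--exclusion: writing $s_i$ for the number of bag vertices of type $i$ and $b_i(v)$ for the number of bag-neighbors of $v$ of type $i$, I require $\Sizes(i)=\Sizes_1(i)+\Sizes_2(i)-s_i$ and $\NeighborhoodSizes(v)[i]=\NeighborhoodSizes_1(v)[i]+\NeighborhoodSizes_2(v)[i]-b_i(v)$ for all $i$ and all $v\in\chi(t)$, where property (iii) certifies that the bag-neighbors are precisely the doubly-counted ones. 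Because there are no edges between the branches outside the bag, the utility of every agent counted in $\sw^{t_1}$ or $\sw^{t_2}$ was already finalized within its own branch, and the vertex sets $G_{t_1}-\chi(t)$ and $G_{t_2}-\chi(t)$ are disjoint, so $\sw^t(\vbf)=\sw^{t_1}(\vbf_1)+\sw^{t_2}(\vbf_2)$. Consequently I enumerate all consistent pairs $(\mathcal{C}_1,\mathcal{C}_2)$ and set $\Gamma_t[\mathcal{C}]$ to the maximum of $\Gamma_{t_1}[\mathcal{C}_1]+\Gamma_{t_2}[\mathcal{C}_2]$; since $\Types_1,\Types_2$ are forced and each of the $k$ size-coordinates and $k|\chi(t)|$ neighborhood-coordinates splits freely, there are at most $(|A|+1)^{k(1+|\chi(t)|)}$ such pairs.

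In every case the work is dominated by enumerating child classes or splits, which is bounded by $(|A|+1)^{k(1+|\chi(t)|)}=|A|^{\bigoh(k\cdot|\chi(t)|)}$, matching the claimed running time. Correctness then follows by induction on $T$: in each direction one checks that a partial assignment belonging to $\mathcal{C}$ restricts to (respectively, is assembled from) partial assignments belonging to the child class(es) singled out by the recurrence, using the three locality properties to guarantee that neighbor counts and utilities are computed over the correct vertex sets. The delicate points to verify carefully are the inclusion--exclusion bookkeeping at the join node and the claim that a forgotten vertex sees its entire $G$-neighborhood, since together these underpin the exactness of $\sw^t$.
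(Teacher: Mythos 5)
Your proposal is correct and follows essentially the same route as the paper's proof: the same leaf/introduce/forget/join case analysis, with the identical consistency check and unique forced child class at introduce nodes, the same finalization of the forgotten vertex's utility via $\Types'(v)$ and $\NeighborhoodSizes'(v)$ maximized over at most $(k+1)(|A|+1)^k$ child classes, and the same inclusion--exclusion merge $\Sizes(i)=\Sizes_1(i)+\Sizes_2(i)-|S_i|$ and $\NeighborhoodSizes(v)[i]=\NeighborhoodSizes_1(v)[i]+\NeighborhoodSizes_2(v)[i]-|N^i_{\chi(t)}(v)|$ at join nodes, where the paper additionally remarks (as your count implicitly does) that fixing $\mathcal{C}_1$ forces $\mathcal{C}_2$. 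Your explicit handling of the zero-neighborhood utility at forget nodes is in fact slightly more careful than the paper's formula, but this is a cosmetic difference, not a change of method.
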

\fi
\iflong 
We will distinguish four cases depending on the type of the node $t$. Moreover, when computing the entries for the node $t$ we always assume that we computed all entries for the children of $t$ in $T$. 
\fi 
\iflong 
\iflong 
\begin{lemma}[leaf node]
\fi
\ifshort 
\begin{lemma}[leaf node ($\star$)]
\fi
\label{lem:leaf_node}
Let $t\in V(T)$ be a leaf node and \(\mathcal{C} = \tuple{\Sizes,\Types,\NeighborhoodSizes}\) an equivalence class for $t$. 
Then we can in \(\mathcal{O}(k)\) time decide whether \(\mathcal{C}\) is valid and if so compute a partial assignment \(\vbf\) that belongs to \(\mathcal{C}\) and maximizes \(\sw^t(\vbf)\). 
\end{lemma}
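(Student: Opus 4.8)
The plan is to exploit the fact that a leaf node of a nice tree-decomposition carries an empty bag, which collapses the lemma to a base case. First I would recall from the definition of a nice tree-decomposition that $|\chi(\ell)|=0$ for every leaf $\ell$ of $T$, so $\chi(t)=\emptyset$. Since $T_t$ consists of the single node $t$, we have $\chi(T_t)=\chi(t)=\emptyset$, and therefore the graph $G_t=G[\chi(T_t)]$ contains no vertices at all.

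Next I would observe that, by the definition of belonging to an equivalence class, any partial assignment $\vbf$ that belongs to $\mathcal{C}$ may assign agents only to vertices of $G_t$. Because $G_t$ is empty, the only such $\vbf$ is the empty assignment $\vbf_\emptyset$ that places no agent anywhere. Hence there is at most one candidate partial assignment, and the maximization of $\sw^t$ over the equivalence class is immediate once validity is decided.

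The third step is to characterize validity. The requirements imposed by $\Types$ and $\NeighborhoodSizes$ are vacuous, since they range over $\chi(t)=\emptyset$ and both are the empty function in any class for $t$; thus they impose no constraint. The only substantive requirement is that for every $i\in[k]$ the number of type-$i$ agents placed in $G_t$ equals $\Sizes(i)$. As $\vbf_\emptyset$ places no agents, this holds if and only if $\Sizes(i)=0$ for all $i\in[k]$. I would therefore declare $\mathcal{C}$ valid exactly when $\Sizes\equiv 0$, and in that case set $\Gamma_t[\mathcal{C}]=\vbf_\emptyset$, noting that $\sw^t(\vbf_\emptyset)=0$ because it is the empty sum over the (empty) set of agents assigned in $G_t-\chi(t)$.

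Finally, the running time follows since testing whether $\Sizes(i)=0$ for all $i\in[k]$ takes $\bigoh(k)$ time, matching the claimed bound. I expect essentially no obstacle here: this is the base case of the dynamic program, and the only points requiring care are to confirm that the $\Types$/$\NeighborhoodSizes$ conditions are genuinely vacuous over an empty bag and that $\sw^t$ of the empty assignment is $0$.
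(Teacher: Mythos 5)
Your proposal is correct and follows essentially the same route as the paper's proof: both reduce the leaf case to the observation that $\chi(t)=V(G_t)=\emptyset$ forces the empty assignment, so validity amounts to checking that \(\Sizes(i)=0\) for all \(i\in[k]\) (with \(\Types\) and \(\NeighborhoodSizes\) necessarily the empty functions), which takes \(\bigoh(k)\) time. Your added remarks that \(\sw^t\) of the empty assignment is \(0\) and that the maximization is trivial are fine and merely spell out what the paper leaves implicit.
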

\iflong
\begin{proof}
 By definition of nice tree-decomposition, we have that \(\chi(t)=V(G_t)=\emptyset\). Hence the only valid assignment for \(t\) is the empty assignment that results in the 
 equivalence class \(\mathcal{C}' = \tuple{\Sizes',\Types',\NeighborhoodSizes'}\), where for all $i\in[k]$ we have $\Sizes'(i)=0$ and \(\Types\) and \(\NeighborhoodSizes\) are empty functions. It is easy to see that we can in time \(\mathcal{O}(k)\) decide whether \(\mathcal{C}= \mathcal{C}'\) and if so output the empty assignment.
\end{proof}
\fi 

\iflong 
\begin{lemma}[introduce node]
\fi
\ifshort 
\begin{lemma}[introduce node ($\star$)]
\fi
\label{lem:introduce_node}
Let $t\in V(T)$ be an introduce node with child $t'$ such that $\chi(t)\setminus \chi(t')=\{v\}$ and \(\mathcal{C} = \tuple{\Sizes,\Types,\NeighborhoodSizes}\) an equivalence class for $t$. 
Then we can in polynomial time decide whether \(\mathcal{C}\) is valid and if so compute a partial assignment \(\vbf\) that belongs to \(\mathcal{C}\) and maximizes \(\sw^t(\vbf)\). 
\end{lemma}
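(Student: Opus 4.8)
The plan is to reduce the computation at an introduce node $t$ to a single lookup in the child table $\Gamma_{t'}$, after correcting the record for the effect of the newly introduced vertex $v$. The starting point is the standard structural fact about introduce nodes: since $v\notin\chi(t')$ and the nodes whose bags contain $v$ form a subtree of $T$, the vertex $v$ appears in no bag of $T_{t'}$, so $V(G_t)=V(G_{t'})\cup\{v\}$ with $v\notin V(G_{t'})$, and every neighbour of $v$ in $G_t$ lies in $\chi(t)$. In particular $v$ has no neighbour in $G_t-\chi(t)$, and since $\chi(t)\cap V(G_{t'})=\chi(t')$ we also get $G_t-\chi(t)=G_{t'}-\chi(t')$. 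These two observations are what make the node "easy".

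First I would run a consistency check depending only on $\mathcal{C}$ and the edges incident to $v$: because all neighbours of $v$ in $G_t$ lie in $\chi(t')=\chi(t)\setminus\{v\}$, the vector $\NeighborhoodSizes(v)$ is completely determined by $\Types$, namely $\NeighborhoodSizes(v)[i]$ must equal the number of neighbours $u$ of $v$ with $u\in\chi(t')$ and $\Types(u)=i$. If this fails, $\mathcal{C}$ is invalid and we reject. Next I would build the unique candidate child class $\mathcal{C}'=\tuple{\Sizes',\Types',\NeighborhoodSizes'}$ to which the restriction to $G_{t'}$ of any $\vbf$ belonging to $\mathcal{C}$ must belong. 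I set $\Types'=\Types|_{\chi(t')}$. If $\Types(v)=0$ (so $v$ is empty) I put $\Sizes'=\Sizes$ and $\NeighborhoodSizes'=\NeighborhoodSizes|_{\chi(t')}$, since deleting the empty vertex $v$ changes no count. If $\Types(v)=j\neq 0$, then exactly one type-$j$ agent sits on $v$, so I decrement: $\Sizes'(j)=\Sizes(j)-1$ (rejecting if $\Sizes(j)=0$) and $\Sizes'(i)=\Sizes(i)$ for $i\neq j$; and for each $u\in\chi(t')$ I subtract the contribution of $v$, i.e.\ $\NeighborhoodSizes'(u)[j]=\NeighborhoodSizes(u)[j]-1$ whenever $uv\in E$, leaving all other entries unchanged (rejecting if any entry would become negative). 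I then declare $\mathcal{C}$ valid exactly when the consistency check passed and $\Gamma_{t'}[\mathcal{C}']$ is non-empty.

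Correctness rests on a value-preserving correspondence between assignments belonging to $\mathcal{C}$ at $t$ and those belonging to $\mathcal{C}'$ at $t'$: restriction to $G_{t'}$ sends the former to the latter, and placing a fresh type-$\Types(v)$ agent on $v$ (one exists because $\Sizes(\Types(v))\le|A_{\Types(v)}|$) inverts it, with interchangeability of same-type agents making the particular choice irrelevant. The updates above are precisely the bookkeeping needed so that $\vbf$ meets the four membership conditions of $\mathcal{C}$ if and only if its restriction $\vbf'$ meets those of $\mathcal{C}'$. Crucially, $\sw^t(\vbf)=\sw^{t'}(\vbf')$: both sums range over the agents on $G_t-\chi(t)=G_{t'}-\chi(t')$, the agent possibly placed on $v\in\chi(t)$ is excluded from both, and no agent counted in the sum is adjacent to $v$ (as $v$ has no neighbour outside $\chi(t)$), so each such agent has identical utility under $\vbf$ and $\vbf'$. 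Hence setting $\Gamma_t[\mathcal{C}]$ to be $\Gamma_{t'}[\mathcal{C}']$ extended by the $v$-assignment yields a maximizer of $\sw^t$ within $\mathcal{C}$.

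Each step is a constant number of table lookups together with $\bigoh(k\cdot|\chi(t)|)$ arithmetic updates and comparisons, so the procedure runs in polynomial time as claimed. The only delicate point is tracking the neighbourhood-size vectors correctly; but since the only edges of $G_t$ newly realized by introducing $v$ are those from $v$ into $\chi(t')$, this reduces to the single additive correction described above, and I expect no obstacle beyond carefully separating the $\Types(v)=0$ and $\Types(v)=j$ cases.
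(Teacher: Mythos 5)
Your proposal is correct and follows essentially the same route as the paper's proof: the same consistency check that $\NeighborhoodSizes(v)$ must match the types recorded on $v$'s neighbours in the bag, the same uniquely determined child class $\mathcal{C}'$ obtained by decrementing $\Sizes$ and the bag-neighbourhood counts, and the same value-preserving restriction/extension correspondence giving $\sw^t(\vbf)=\sw^{t'}(\vbf')$. Your explicit treatment of the $\Types(v)=0$ case and the fresh-agent availability are details the paper leaves implicit, but they do not change the argument.
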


\iflong 
\begin{proof}

First note that from the properties of a tree-decomposition it follows that all neighbors of \(v\) are in \(\chi(v)\). Hence, if for some $i\in [k]$ we have that \(\NeighborhoodSizes(v)[i]\) does not equal to the number of neighbors $w$ of $v$ such that \(\Types(w)=i\), then there cannot exist a partial assignment that belongs to \(\mathcal{C}\) and \(\mathcal{C}\) is not valid. 
Else, let \(\mathcal{C}' = \tuple{\Sizes',\Types',\NeighborhoodSizes'}\) be an equivalence class for \(t'\) such that
\begin{itemize}
    \item for all \(i\in[k]\), \(\Sizes'(i) = \Sizes(i)- 1\) if \(\Types(v)=i\) and \(\Sizes'(i) = \Sizes(i)\) otherwise, 
    \item for all \(w\in \chi(t')=\chi(t)\setminus \{v\}\), \(\Types'(w) = \Types(w)\), and 
    \item for all \(i\in[k]\) and \(w\in \chi(t')\), \(\NeighborhoodSizes'(w)[i]= \NeighborhoodSizes(w)[i] - 1\) if \(vw\in E(G)\) and \(\Types(v)=i\); and \(\NeighborhoodSizes'(w)[i]= \NeighborhoodSizes(w)[i]\) otherwise. 
\end{itemize}
It is easy to see that if a partial assignment \(\vbf\) belongs to \(\mathcal{C}\), then the partial assignment \(\vbf'\) obtained from \(\vbf\) by restriction to \(G_{t'}\) belongs to \(\mathcal{C}'\). Similarly, if a partial assignment \(\vbf'\) belongs to \(\mathcal{C}'\), then the partial assignment \(\vbf\) we obtain from \(\vbf'\) by assigning the vertex $v$ an agent of type \(\Types(v)\) belongs to \(\mathcal{C}\). Moreover, for such pair of partial assignments \(\vbf\) and \(\vbf'\), the utility of every agent in $V(G_t)\setminus \chi(t) = V(G_{t'})\setminus \chi(t') $ is the same. Therefore, \(\sw^t(\vbf) = \sw^{t'}(\vbf')\). It follows that \(\mathcal{C}\) is valid if and only if \(\mathcal{C}'\) is valid. Moreover, if \(\Gamma_{t'}[\mathcal{C}']=\vbf'\), then we obtain \(\Gamma_{t}[\mathcal{C}]\) by taking \(\vbf'\) and in addition assigning an arbitrary agent of type \(\Types(v)\) the vertex $v$. 
\end{proof}
\fi 

\iflong 
\begin{lemma}[forget node]
\fi
\ifshort 
\begin{lemma}[forget node ($\star$)]
\fi
\label{lem:forget_node}
Let $t\in V(T)$ be a forget node with child $t'$ such that $\chi(t')\setminus \chi(t)=\{v\}$ and \(\mathcal{C} = \tuple{\Sizes,\Types,\NeighborhoodSizes}\) an equivalence class for $t$. 
Then we can in \(\mathcal{O}((k+1)|A|^k)\) time decide whether \(\mathcal{C}\) is valid and if so compute a partial assignment \(\vbf\) that belongs to \(\mathcal{C}\) and maximizes \(\sw^t(\vbf)\). 
\end{lemma}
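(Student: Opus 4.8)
The plan is to exploit the defining feature of a forget node: since $\chi(T_t)=\chi(T_{t'})$ we have $G_t=G_{t'}$, so forgetting $v$ changes neither the assignment itself nor the counts recorded by $\Sizes$ or by $\Types$ and $\NeighborhoodSizes$ on $\chi(t)=\chi(t')\setminus\{v\}$. The only genuine change is that $v$ passes out of the bag into $V(G_t)\setminus\chi(t)$. By the standard tree-decomposition property, once $v$ is forgotten it never reappears in a bag, so every neighbour of $v$ already lies in $G_t$; hence the utility of the agent occupying $v$ is now final and must be charged to $\sw^t$. Concretely, if a partial assignment $\vbf$ belongs to a child class $\mathcal{C}'=\tuple{\Sizes',\Types',\NeighborhoodSizes'}$, then $\sw^t(\vbf)=\sw^{t'}(\vbf)+u_v$, where $u_v$ is read off directly from $\mathcal{C}'$: if $\Types'(v)=0$ then $u_v=0$, and otherwise, writing $\NeighborhoodSizes'(v)=\tuple{n_1,\dots,n_k}$ and $c=\Types'(v)$, we have $u_v=n_c/\sum_{j\in[k]}n_j$ when $\sum_{j}n_j>0$ and $u_v=0$ when $v$ has no occupied neighbour.

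First I would reduce $\mathcal{C}$ to a family of candidate child classes. A child class $\mathcal{C}'=\tuple{\Sizes',\Types',\NeighborhoodSizes'}$ is \emph{compatible} with $\mathcal{C}$ precisely when $\Sizes'=\Sizes$ and when $\Types'$ and $\NeighborhoodSizes'$ restrict to $\Types$ and $\NeighborhoodSizes$ on $\chi(t)$; the remaining freedom is the choice of $\Types'(v)\in\{0,\dots,k\}$ and of $\NeighborhoodSizes'(v)\in\mathbb{N}^k$ (with each coordinate $n_i$ bounded by $|A_i|$). I would enumerate all such $\mathcal{C}'$, discard those for which $\Gamma_{t'}[\mathcal{C}']$ is undefined (i.e. invalid), and for each surviving class compute the candidate value $\sw^{t'}(\Gamma_{t'}[\mathcal{C}'])+u_v$. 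Then $\mathcal{C}$ is declared valid if and only if at least one compatible $\mathcal{C}'$ is valid, and $\Gamma_t[\mathcal{C}]$ is set to the (unchanged) assignment $\Gamma_{t'}[\mathcal{C}^\star]$ for the class $\mathcal{C}^\star$ attaining the maximum candidate value.

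Correctness then splits into the two usual directions. For soundness, any $\vbf$ belonging to $\mathcal{C}$ belongs to exactly one compatible child class $\mathcal{C}'$ (obtained by recording $v$'s type and its neighbour multiplicities in $G_t$), and since $u_v$ depends only on $\mathcal{C}'$ while $\Gamma_{t'}[\mathcal{C}']$ maximizes $\sw^{t'}$ over its class, the candidate value computed for $\mathcal{C}'$ is at least $\sw^t(\vbf)$; conversely every candidate value is realized by an actual assignment belonging to $\mathcal{C}$, so the maximum equals the optimum. For the running time, there are $k+1$ choices for $\Types'(v)$ and at most $(|A|+1)^k$ choices for $\NeighborhoodSizes'(v)$, each handled in $\bigoh(k)$ time, giving the claimed $\bigoh((k+1)|A|^k)$ bound. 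The point I expect to require the most care is the observation that the \emph{forgotten-neighbour} part of $\NeighborhoodSizes'(v)$ is not pinned down by $\mathcal{C}$: two assignments can agree on everything recorded in $\mathcal{C}$ yet give $v$ different neighbourhoods and hence different utilities, so we must genuinely range over all admissible $\NeighborhoodSizes'(v)$ and take the best, rather than reading off a single child class.
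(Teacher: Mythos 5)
Your proposal is correct and follows essentially the same route as the paper's proof: enumerate the at most $(k+1)|A|^k$ child classes compatible with $\mathcal{C}$ (free choice of $\Types'(v)$ and $\NeighborhoodSizes'(v)$, everything else pinned down), add the class-constant utility $u^{\mathcal{C}'}(v)$ of the forgotten vertex to $\sw^{t'}(\Gamma_{t'}[\mathcal{C}'])$, and return the table entry of a maximizing class, declaring $\mathcal{C}$ invalid when no compatible class is valid. Your explicit handling of the degenerate case $\sum_{j}n_j=0$, where the paper's formula for $u^{\mathcal{C}'}(v)$ would divide by zero, is a small extra precision but not a different argument.
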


\iflong 
\begin{proof}
Note that \(G_t = G_{t'}\). Hence a partial assignment \(\vbf\) assigns agents only to the vertices of \(G_t\) if and only if it assigns agents only to vertices of \(G_{t'}\). Now let \(\vbf\) be a partial assignment that belongs to \(\mathcal{C}\) and let us to which equivalence classes for $t'$ it can belong. Let \(\vbf\) belong to the equivalence class \(\mathcal{C}'= \tuple{\Sizes',\Types',\NeighborhoodSizes'}\) for $t'$.
Then \(\Sizes = \Sizes'\) and for all $w\in \chi(t)$ we have \(\Types(w) = \Types'(w)\) and  \(\NeighborhoodSizes(w) = \NeighborhoodSizes'(w)\). However, \(\Types(v)\) and \(\NeighborhoodSizes(v)\) is not defined, since $v$ is not in $\chi(t)$, hence \(\Types'(v)\) and \(\NeighborhoodSizes'(v)\) can be arbitrary. 

Let $\mathrm{C}$ be the set of all valid equivalence classes for $t'$ such that for all $\mathcal{C}'= \tuple{\Sizes',\Types',\NeighborhoodSizes'} \in \mathrm{C}$ it holds that:
\begin{itemize}
    \item  \(\Sizes = \Sizes'\), and 
    \item for all $w\in \chi(t)$, \(\Types(w) = \Types'(w)\) and  \(\NeighborhoodSizes(w) = \NeighborhoodSizes'(w)\).
\end{itemize}
As discussed above, for every partial assignment \(\vbf\) that belongs to \(\mathcal{C}\), there exists an equivalence class \(\mathcal{C'}\in \mathrm{C}\) such that \(\vbf\) belongs to \(\mathcal{C}'\).  Hence, if \(\mathrm{C}\) is empty, we can return that \(\mathcal{C}\) is not valid.  Moreover, it is easy to see that a partial assignment that belongs to some class in \(\mathrm{C}\) also belongs to \(\mathcal{C}\). 

Let \(\mathcal{C}'\) be an equivalence class in \(\mathrm{C}\), and let $u^{\mathcal{C}'}(v)=0$ if $\Types'(v)=0$ and \[u^{\mathcal{C}'}(v)=\frac{\NeighborhoodSizes'(v)[\Types'(v)]}{\sum_{i\in [k]}\NeighborhoodSizes'(v)[i]}\] otherwise. 

For a partial assignment \(\vbf\) that belongs to \(\mathcal{C}'\), we have that \(\sw^t(\vbf)=\sw^{t'}(\vbf) + u^{\mathcal{C}'}(v)\). Since $u^{\mathcal{C}'}(v)$ is constant for given equivalence class $\mathcal{C}'$, it follows that if \(\vbf\) is a partial assignment that belongs to \(\mathcal{C}\) and maximizes \(\sw^t(\vbf)\), then \(\vbf\) also maximizes $\sw^{t'}(\vbf)$. Moreover, if \(\vbf\) belongs to \(\mathcal{C}'\in \mathrm{C}\), then all partial assignments \(\vbf'\) that maximize $\sw^{t'}(\vbf')$ also maximize \(\sw^t(\vbf)\). 

Therefore, to compute \(\Gamma_t[\mathcal{C}]\) we compute for every \(\mathcal{C}'\in \mathrm{C}\) the value \(\mathrm{val}_{\mathcal{C}'}=\sw^{t'}(\Gamma_{t'}[\mathcal{C}']) + u^{\mathcal{C}'}(v)\}\) and we let \(\Gamma_t[\mathcal{C}]= \Gamma_{t'}[\mathcal{C'}]\) for arbitrary equivalence class \(\mathcal{C}'\in \mathrm{C}\) that maximizes \(\mathrm{val}_{\mathcal{C}'}\). The running time then follows from the observation that \(|\mathrm{C}|\le (k+1)|A|^k\). 
\end{proof}
\fi

\iflong 
\begin{lemma}[join node]
\fi
\ifshort 
\begin{lemma}[join node ($\star$)]
\fi
\label{lem:join_node}
Let $t\in V(T)$ be a join node with children $t_1$ and $t_2$, where $\chi(t)=\chi(t_1)=\chi(t_2)$ and \(\mathcal{C} = \tuple{\Sizes,\Types,\NeighborhoodSizes}\) an equivalence class for $t$.
Then we can in \((\bigoh(|A|+1)^{k(1+|\chi(t)|)})\) time decide whether \(\mathcal{C}\) is valid and if so compute a partial assignment \(\vbf\) that belongs to \(\mathcal{C}\) and maximizes \(\sw^t(\vbf)\). 
\end{lemma}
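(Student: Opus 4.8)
The plan is to exploit the fact that, for a join node, $G_t$ is obtained by gluing $G_{t_1}$ and $G_{t_2}$ along the common bag $\chi(t)=\chi(t_1)=\chi(t_2)$, which is a separator: $V(G_{t_1})\cap V(G_{t_2})=\chi(t)$, and every vertex of $G_{t_1}\setminus\chi(t)$ is non-adjacent to every vertex of $G_{t_2}\setminus\chi(t)$. Hence any partial assignment $\vbf$ belonging to $\mathcal{C}$ restricts to partial assignments $\vbf_1$ on $G_{t_1}$ and $\vbf_2$ on $G_{t_2}$ that agree on $\chi(t)$, and conversely two such assignments that agree on the types placed on $\chi(t)$ glue back together. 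Since the utility of an agent placed in $G_{t_j}\setminus\chi(t)$ depends only on its neighbours, all of which lie in $G_{t_j}$, the welfare decomposes additively as $\sw^t(\vbf)=\sw^{t_1}(\vbf_1)+\sw^{t_2}(\vbf_2)$.

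First I would record the contribution of the bag itself. Using $\Types$ alone I compute, for each type $i\in[k]$, the number $s^\chi(i)$ of vertices $v\in\chi(t)$ with $\Types(v)=i$, and for each $v\in\chi(t)$ and each $i$, the number $n^\chi(v)[i]$ of neighbours $w$ of $v$ inside $\chi(t)$ with $\Types(w)=i$. A straightforward inclusion--exclusion then characterises exactly which pairs of child classes combine into $\mathcal{C}$: classes $\mathcal{C}_1=\tuple{\Sizes_1,\Types_1,\NeighborhoodSizes_1}$ and $\mathcal{C}_2=\tuple{\Sizes_2,\Types_2,\NeighborhoodSizes_2}$ are \emph{compatible with} $\mathcal{C}$ precisely when $\Types_1=\Types_2=\Types$, and for all $i\in[k]$ and all $v\in\chi(t)$,
\[
\Sizes(i)=\Sizes_1(i)+\Sizes_2(i)-s^\chi(i),\qquad
\NeighborhoodSizes(v)[i]=\NeighborhoodSizes_1(v)[i]+\NeighborhoodSizes_2(v)[i]-n^\chi(v)[i].
\]
The two subtracted terms correct for the double counting of the agents, respectively the neighbours, living on the shared bag.

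Next I would enumerate candidate pairs. Once $\mathcal{C}$ and $\mathcal{C}_1$ are fixed, $\mathcal{C}_2$ is completely determined by the two displayed equations together with $\Types_2=\Types$; so it suffices to iterate over all choices of $\Sizes_1$ and $\NeighborhoodSizes_1$, of which there are at most $(|A|+1)^{k(1+|\chi(t)|)}$, and for each to read off the forced $\mathcal{C}_2$. For every candidate pair for which both $\Gamma_{t_1}[\mathcal{C}_1]$ and $\Gamma_{t_2}[\mathcal{C}_2]$ are defined I form the value $\sw^{t_1}(\Gamma_{t_1}[\mathcal{C}_1])+\sw^{t_2}(\Gamma_{t_2}[\mathcal{C}_2])$; then $\mathcal{C}$ is valid iff at least one such pair exists, and I set $\Gamma_t[\mathcal{C}]$ to the gluing of the stored assignments of a best pair. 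As each forced class and each table look-up is computable in time polynomial in $|A|+|\chi(t)|$, the running time is $\bigoh((|A|+1)^{k(1+|\chi(t)|)})$, as claimed.

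The step I expect to be the main obstacle is the actual gluing of the two stored assignments into a single \emph{valid} partial assignment belonging to $\mathcal{C}$: the tables record only the \emph{types} placed on $\chi(t)$, not the individual agents, and no agent may be used twice. The fix is the interchangeability of agents of a common type: all agents of type $i$ have identical friend sets, so relabelling agents within a type preserves every utility and every quantity tracked by an equivalence class. Concretely, I relabel $\Gamma_{t_1}[\mathcal{C}_1]$ and $\Gamma_{t_2}[\mathcal{C}_2]$ so that they place the very same agents on $\chi(t)$ (possible since $\Types_1=\Types_2$ fixes, vertex by vertex, the type to be placed there) and so that the type-$i$ agents used off the bag on the two sides are pairwise distinct. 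The latter is feasible exactly because the first compatibility equation gives $\Sizes_1(i)+\Sizes_2(i)-s^\chi(i)=\Sizes(i)\le|A_i|$, leaving enough distinct type-$i$ agents to cover both sides at once. The glued assignment then belongs to $\mathcal{C}$ and, by the additive decomposition, attains $\sw^{t_1}(\vbf_1)+\sw^{t_2}(\vbf_2)$; optimality over all of $\mathcal{C}$ follows since every $\vbf$ belonging to $\mathcal{C}$ restricts to a compatible pair whose stored values dominate $\sw^{t_1}(\vbf_1)$ and $\sw^{t_2}(\vbf_2)$.
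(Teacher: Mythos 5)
Your proposal is correct and takes essentially the same route as the paper's proof: the same compatibility conditions ($\Types_1=\Types_2=\Types$; $\Sizes(i)=\Sizes_1(i)+\Sizes_2(i)-|S_i|$; $\NeighborhoodSizes(v)[i]=\NeighborhoodSizes_1(v)[i]+\NeighborhoodSizes_2(v)[i]-|N^i_{\chi(t)}(v)|$), the same additive decomposition $\sw^t(\vbf)=\sw^{t_1}(\vbf_1)+\sw^{t_2}(\vbf_2)$ via the separator property of the bag, and the same running-time argument that fixing $\mathcal{C}_1$ forces $\mathcal{C}_2$. Your explicit relabelling argument for gluing the two stored assignments (using interchangeability of same-type agents and the bound $\Sizes(i)\le|A_i|$) correctly fills in a step the paper dismisses as ``straightforward to verify.''
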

\iflong 
\begin{proof}
For \(i\in [k]\), let $S_i$ be the set of vertices $v$ in \(\chi(t)\) such that \(\Types(v)=i\). Moreover for a vertex $v\in \chi(t)$ and \(i\in [k]\), let \(N^i_{\chi(t)}(v) = \{w\mid vw\in E(G)\land w\in \chi(t)\land \Types(w)=i\}\); that is the set of neighbors $w$ of \(v\) in \(\chi(t)\) such that \(\Types(w)=i\).  

First, assume that \(\mathcal{C}\) is valid and let \(\vbf\) be a partial assignment that belongs to \(\mathcal{C}\). Note that for every agent-type \(i\in[k]\), there are precisely \(\Sizes(i)-|S_i|\) many agents assigned a vertex in \(G_t\setminus \chi(t)\). Every such vertex is either in \(G_{t_1}\) or in \(G_{t_2}\) but not both. Similarly for every \(i\in[k]\) and a vertex \(v\in \chi(v)\) the agents of type \(i\) are assigned to precisely \(\NeighborhoodSizes(v)[i]- |N^i_{\chi(t)}(v)|\) many neighbors of \(v\) in \(G_t\setminus \chi(t)\). 
Now, let \(\vbf_1\) and \(\vbf_2\) be the restrictions of \(\vbf\) to \(G_{t_1}\) and \(G_{t_2}\), respectively. Finally, let \(\mathcal{C}_1 = \tuple{\Sizes_1,\Types_1,\NeighborhoodSizes_1}\) and \(\mathcal{C}_2 = \tuple{\Sizes_2,\Types_2,\NeighborhoodSizes_2}\) be the equivalence classes such that \(\vbf_1\) belongs to \(\mathcal{C}_1\) and \(\vbf_2\) belongs to \(\mathcal{C}_2\).
First, as \(\chi(t)=\chi(t_1)=\chi(t_2) \) it follows that \(\Types = \Types_1 = \Types_2\). Notice that \(\sw^t(\vbf) = \sw^{t_1}(\vbf_1) + \sw^{t_1}(\vbf_2)\).  Moreover, from our discussion above it follows that, for every \(i\in [k]\), \(\Sizes(i)-|S_i| = \Sizes_1(i)-|S_i| + \Sizes_2(i)-|S_i| \), or equivalently \(\Sizes(i) = \Sizes_1(i) + \Sizes_2(i)-|S_i|\). Similarly for every vertex \(v\in \chi(t)\) and every \(i\in[k]\) we have \(\NeighborhoodSizes(v)[i] = \NeighborhoodSizes_1(v)[i] + \NeighborhoodSizes_2(v)[i]-|N^i_{\chi(t)}(v)|\). 

On the other hand, assume that we have two valid equivalence classes \(\mathcal{C}_1 = \tuple{\Sizes_1,\Types_1,\NeighborhoodSizes_1}\) for \(t_1\) and \(\mathcal{C}_2 = \tuple{\Sizes_2,\Types_2,\NeighborhoodSizes_2}\) such that 

\begin{enumerate}
    \item[(A)] \(\Types = \Types_1 = \Types_2\); 
    \item[(B)] for all \(i\in [k]\), \(\Sizes(i) = \Sizes_1(i) + \Sizes_2(i)-|S_i|\); and
    \item[(C)]  for all \(v\in \chi(t)\) and all \(i\in [k]\), \(\NeighborhoodSizes(v)[i] = \NeighborhoodSizes_1(v)[i] + \NeighborhoodSizes_2(v)[i]-|N^i_{\chi(t)}(v)|\) .
\end{enumerate}
Let \(\vbf_1\) be a partial assignment that belongs to \(\mathcal{C}_1\) and \(\vbf_2\) be a partial assignment that belongs to \(\mathcal{C}_2\). We can construct a partial assignment \(\vbf\) that belongs to \(\mathcal{C}\) by assigning each vertex \(v\in \chi(t)\) to an agent of type \(\Types(v)\), every vertex \(w\in G_{t_1}\setminus\chi(t)\) an agent of the same type as the agent assigned the vertex \(w\) by \(\vbf_1\), and every vertex \(u\in G_{t_2}\setminus\chi(t)\) an agent of the same type as the agent assigned the vertex \(u\) by \(\vbf_2\). It is straightforward to verify that \(\vbf\) belongs to \(\mathcal{C}\) and that \(\sw^t(\vbf) = \sw^{t_1}(\vbf_1) + \sw^{t_1}(\vbf_2)\). 

From the above discussion, it follows that it suffices to go over all pairs of 
equivalence classes \(\mathcal{C}_1\) and \(\mathcal{C}_2\) that satisfy the conditions 
(A), (B), and (C). For each pair we compute \(\mathrm{val}_{\mathcal{C}_1,\mathcal{C}_2 }= \sw^{t_1}(\Gamma_{t_1}[\mathcal{C}_1])+\sw^{t_2}(\Gamma_{t_2}[\mathcal{C}_2])\) and 
pick arbitrary pair \(\mathcal{C}_1\) and \(\mathcal{C}_2\) that maximizes  
\(\mathrm{val}_{\mathcal{C}_1,\mathcal{C}_2}\). We can then compute the assignment 
\(\Gamma_t[\mathcal{C}]\) from \(\Gamma_{t_1}[\mathcal{C}_1]\) and 
\(\Gamma_{t_2}[\mathcal{C}_2]\) as described before. The running time follows from 
the observation that once we fix the equivalence class \(\mathcal{C}_1\), then the 
equivalence class \(\mathcal{C}_2\) is fixed by the conditions (A), (B), and (C). 
Moreover, \(\Types_1\) and \(\Types_2\) is fixed by the condition (A). 
\end{proof}
\fi 
\fi
\iflong
We are now ready to prove the main theorem. 
\fi

\begin{proof}[Proof of~Theorem~\ref{thm:tw_algorithm}]
Let $\tuple{G,A}$ be an instance of $\phi$-\SchellingM, with $\phi \in \{\WO, \PO, \GWO, \UVO\}$. We will compute a \WO assignment, which by Proposition~\ref{pro:notions} is \PO, \GWO, and \UVO. The algorithm first \iflong uses Proposition~\ref{fact:findtw} to compute \fi\ifshort computes \fi{} a nice tree decomposition \(\mathcal{T}=(T,\chi)\) of $G$ of width  \iflong at most $w$ in FPT-time w.r.t. $w$\fi\ifshort \(w\le 5\tw(G)+4\) in FPT-time~\cite{BodlaenderDDFLP16}\fi. \iflong Note that $w$ is at most \(5\tw(G)+4\).\fi  Afterwards, we use the \iflong algorithms of Lemmas~\ref{lem:leaf_node}, \ref{lem:introduce_node}, \ref{lem:forget_node}, and \ref{lem:join_node} \fi\ifshort algorithm of Lemma~\ref{lem:all_nodes} \fi to compute for every node $t$ and every valid equivalence class \(\mathcal{C}\) for \(t\) a partial assignment that belongs to \(\mathcal{C}\) and maximizes \(\sw^t(\vbf)\) among all the partial assignments that belong to the equivalence class \(\mathcal{C}\). An assignment that maximizes the social welfare is then the partial assignment that we computed for the root node of $T$ for the equivalence class 
\(\mathcal{C}_r=\tuple{\Sizes,\Types,\NeighborhoodSizes}\), 
where for all \(i\in [k]\) we have \(\Sizes(i)=|A_i|\) and \(\Types\) and \(\NeighborhoodSizes\) are empty functions. The correctness follows from the correctness of \iflong Lemmas~\ref{lem:leaf_node}, \ref{lem:introduce_node}, \ref{lem:forget_node}, and \ref{lem:join_node}\fi\ifshort Lemma~\ref{lem:all_nodes}\fi . The running time of the algorithm is at most the number of nodes of $T$, i.e., at most $w^2|V(G)|$, times the maximum number of equivalence classes for a node in $t$, i.e.,  \((|A|+1)^{k(2+w)}\cdot (k+1)^{w+1}\) by Observation~\ref{obs:No_eq_classes}, times the maximum time required to compute a partial assignment for a node $t$ and an equivalence class \(\mathcal{C}\) for any of the four node types of a nice tree-decomposition, which because of \iflong Lemmas~\ref{lem:leaf_node}, \ref{lem:introduce_node}, \ref{lem:forget_node}, and \ref{lem:join_node} is at most \((\bigoh(|A|+1)^{k(2+w)})\)\fi\ifshort Lemma~\ref{lem:all_nodes} is at most \(|A|^{\bigoh(k w)}\)\fi . Therefore, \(|A|^{\bigoh(kw)}\cdot |V(G)|= |A|^{\bigoh(k\cdot \tw(G))}\cdot |V(G)|\) is the total running time of the algorithm. 
\end{proof}

Observe that the number of agent-types is always at most the number of agents. It then follows from the running time of the algorithm that \SchellingM is actually FPT when parameterized by treewidth plus number of agents. 
\begin{corollary}
\label{cor:fpt-tw-n}
$\phi$-\SchellingM is in \FPT{} when parameterized by treewidth and the number of agents, for every $\phi \in \{\WO, \PO, \GWO, \UVO\}$. 
\end{corollary}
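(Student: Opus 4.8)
The statement to prove is Corollary~\ref{cor:fpt-tw-n}: that $\phi$-\SchellingM is in \FPT{} when parameterized by treewidth plus the number of agents. My plan is to obtain this as an immediate consequence of the running-time bound established in Theorem~\ref{thm:tw_algorithm}, by exploiting the elementary observation that the number of agent-types $k$ can never exceed the total number of agents $|A|$. In other words, I would not design a new algorithm at all; I would reuse the dynamic-programming algorithm from Theorem~\ref{thm:tw_algorithm} verbatim and merely re-examine its stated time complexity under the new parameterization.

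\textbf{Key steps.} First I would recall that Theorem~\ref{thm:tw_algorithm} gives an algorithm running in time $|A|^{\bigoh(k\cdot\tw(G))}\cdot |V(G)|$. Second, I would invoke the trivial inequality $k\le |A|$, which holds because each type must be realized by at least one agent (or, even without that convention, because there are at most $|A|$ distinct type-labels appearing among the agents). Third, substituting this bound into the exponent, the running time becomes $|A|^{\bigoh(|A|\cdot\tw(G))}\cdot |V(G)|$. The crucial point is that this entire factor $|A|^{\bigoh(|A|\cdot\tw(G))}$ depends only on the two quantities $|A|$ and $\tw(G)$, both of which are part of the combined parameter; it is therefore of the form $f(|A|,\tw(G))\cdot |V(G)|$ for a computable function $f$. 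Since $|V(G)|$ is linear in the input size, this matches precisely the definition of an \FPT{} algorithm with respect to the parameter $\tw(G)+|A|$. Finally, I would note that by Proposition~\ref{pro:notions} computing a \WO{} assignment simultaneously yields an assignment that is \PO, \GWO, and \UVO, so the single algorithm settles all four notions $\phi\in\{\WO,\PO,\GWO,\UVO\}$ at once.

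\textbf{The main obstacle.} There is essentially no mathematical obstacle here; the argument is a one-line observation about bounding the exponent. The only point requiring a modicum of care is confirming that the exponential factor in the running time genuinely involves no quantity outside the declared parameter set—that is, verifying that after replacing $k$ by $|A|$ the exponent contains solely $|A|$ and $\tw(G)$ and no hidden dependence on $|V(G)|$ or other input measures. Inspecting the bound $|A|^{\bigoh(k\cdot\tw(G))}$ confirms the base $|A|$ is not itself the parameter but appears raised to a parameter-bounded power, which is fine: $|A|^{g(|A|,\tw(G))}$ is still a function of the parameter alone. Hence the corollary follows directly, and the proof amounts to recording this substitution.
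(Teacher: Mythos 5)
Your proposal is correct and matches the paper's own argument exactly: the paper derives Corollary~\ref{cor:fpt-tw-n} from Theorem~\ref{thm:tw_algorithm} precisely by observing that $k\le |A|$, so the running time $|A|^{\bigoh(k\cdot\tw(G))}\cdot|V(G)|$ is of the form $f(|A|,\tw(G))\cdot|V(G)|$. Your additional check that the exponential factor depends only on the combined parameter, and the appeal to Proposition~\ref{pro:notions} to cover all four notions, are both consistent with (indeed already built into) the paper's reasoning.
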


Finally, while Corollary~\ref{cor:trees-notper} implies that we cannot obtain an FPT algorithm if the number of agent-types is part of the parameter, unless $\FPT=\W{1}$, it remains an interesting open question whether this is possible for a constant number of agent types. 
\ifshort
While we do not fully resolve this question, we show that if we are looking for a perfect assignment, then we can modify \(\NeighborhoodSizes\) to keep only a Boolean information for every agent encoding whether it already has a neighbor of the same type. If we follow the proof of Theorem~\ref{thm:tw_algorithm} with this modification and we also reject all partial assignments that cannot lead to a perfect assignment, we get the following result.
\fi
\iflong While our algorithm cannot resolve this in general, we would like to point out a specific case that can be solved in FPT-time with a very minor modification of our algorithm. Assume that we wish to maximize the social welfare under the additional constraint that every agent is allowed to have only neighbors of the same type. Note that this is the only way to get social welfare equal to the number of agents. In this case, we would reject any equivalence class \(\mathcal{C} = \tuple{\Sizes,\Types,\NeighborhoodSizes}\), where \(\NeighborhoodSizes[v][i]>0\) such that $\Types[v]\neq i$. Moreover, if   $\Types[v]= i$, then we only care whether \(\NeighborhoodSizes[v][i]>0\) and the utility of the agent that is assigned vertex $v$ is $1$ or \(\NeighborhoodSizes[v][i]=0\) and the utility of this agent is $0$. Hence, we can replace \(\NeighborhoodSizes\) by a function from \(\chi(t)\rightarrow \{0,1\}\) with meaning that if \(\NeighborhoodSizes(v)=0\), then no neighbor of $v$ is assigned to any agent and if \(\NeighborhoodSizes(v)=1\), then at least one of the neighbors of $v$ is assigned some agent of \(\Types(v)\) and no neighbor of $v$ is assigned a agent of any other type. The algorithm then follows more or less analogously the proof of Theorem~\ref{thm:tw_algorithm}. It is easy to see that the number of these "modified" equivalence classes is at most $|A|^k\cdot 2^{|\chi(t)|}\cdot(k+1)^{|\chi(t)|}$, which is FPT by treewidth if the number of agent-types $k$ is a fixed constant. 
\fi 
\iflong
\begin{corollary}
\fi 
\ifshort 
\begin{corollary}[$\star$]
\fi 
\label{cor:fpt-perfect}
When the number of types is constant, \perfect-\SchellingM admits an FPT algorithm parameterized by treewidth. 
\end{corollary}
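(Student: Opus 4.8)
The plan is to adapt the XP dynamic programming algorithm of Theorem~\ref{thm:tw_algorithm} by shrinking the $\NeighborhoodSizes$ component of each equivalence class, which is the only source of the $|A|^{\bigoh(k\tw(G))}$ blow-up. The crucial observation is that in a \perfect assignment every agent must have utility exactly $1$, meaning every occupied vertex may only be adjacent to agents of its own type and must have at least one same-type neighbor. Consequently, for the purpose of detecting whether a partial assignment can be extended to a \perfect one, we do not need to record the full vector of neighbor counts per type for each boundary vertex $v\in\chi(t)$; it suffices to record a single Boolean flag indicating whether $v$ already has a same-type neighbor inside $G_t$, together with an immediate rejection of any class where a vertex has a neighbor of a different type.

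Concretely, I would redefine an equivalence class as $\tuple{\Sizes,\Types,\NeighborhoodSizes}$ where $\NeighborhoodSizes\colon\chi(t)\to\{0,1\}$ replaces the previous $\mathbb{N}^k$-valued function. A partial assignment belongs to such a class if, in addition to the $\Sizes$ and $\Types$ conditions of Theorem~\ref{thm:tw_algorithm}, every occupied vertex $v\in\chi(t)$ has only same-type neighbors in $G_t$ (otherwise the class is invalid and discarded), and $\NeighborhoodSizes(v)=1$ exactly when $v$ has at least one neighbor of type $\Types(v)$ in $G_t$. I would then rerun the four node-type updates essentially verbatim. In the introduce and join nodes the Boolean flag is combined by a logical OR of the contributions from $\chi(t)$-neighbors and the children; any attempt to place adjacent agents of distinct types triggers rejection. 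In the forget node, when $v$ leaves the bag we additionally require $\NeighborhoodSizes(v)=1$ before that partial assignment is allowed to survive, since a forgotten vertex can gain no further neighbors and hence must already be satisfied. At the root we simply check whether the target class with $\Sizes(i)=|A_i|$ for all $i$ is valid.

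The key efficiency gain is captured by a recount analogous to Observation~\ref{obs:No_eq_classes}: the number of modified equivalence classes for a node $t$ is at most $(|A|+1)^{k}\cdot(k+1)^{|\chi(t)|}\cdot 2^{|\chi(t)|}$, where the first factor bounds the choices for $\Sizes$, the second the choices for $\Types$, and the third the Boolean vector $\NeighborhoodSizes$. When $k$ is a fixed constant, $(|A|+1)^{k}$ is polynomial in the input size, and the remaining $(2(k+1))^{|\chi(t)|}$ factor is bounded by $g(\tw(G))$ for a computable function $g$. Thus the total running time becomes $f(\tw(G))\cdot|V(G)|^{\bigoh(1)}$, which is FPT in treewidth, as claimed.

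The main obstacle is arguing correctness of the reduced record, that is, that two partial assignments inducing the same reduced class $\tuple{\Sizes,\Types,\NeighborhoodSizes}$ are genuinely interchangeable with respect to extendability to a global \perfect assignment. I would establish this by an exchange argument: since only same-type adjacencies are permitted and a vertex's utility is $1$ as soon as it has one same-type neighbor, the exact count of such neighbors is irrelevant and the Boolean flag fully determines whether the vertex is already satisfied or still needs a same-type neighbor among the not-yet-introduced vertices. The one subtlety to verify carefully is the forget node, where I must ensure that forgetting a vertex with flag $0$ is correctly rejected rather than silently lost, so that no partial assignment leaving an agent permanently without a same-type neighbor can propagate upward; handling this bookkeeping cleanly is where the argument requires the most care.
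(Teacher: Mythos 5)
Your proposal is correct and follows essentially the same route as the paper: replace the $\mathbb{N}^k$-valued $\NeighborhoodSizes$ by a Boolean flag per bag vertex, reject any class exhibiting a cross-type adjacency, and observe that the number of modified equivalence classes drops to $|A|^{k}\cdot 2^{|\chi(t)|}\cdot(k+1)^{|\chi(t)|}$, which is FPT in treewidth for constant $k$. The only cosmetic difference is that you enforce perfection by rejecting at forget nodes any occupied vertex whose flag is still $0$, whereas the paper keeps such assignments with utility $0$ and checks at the root that the maximum welfare equals $|A|$; both bookkeeping choices are sound.
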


\section{Conclusions}
\label{sec:conclusions}

In this paper we studied $\phi$-\Schelling for $\phi \in \{\WO, \PO, \UVO, \GWO\}$. We presented both strong negative results and accompanying algorithms.
Our results show that tractability of $\phi$-\Schelling for every optimality notion considered {\em requires} the underlying graph to be constrained. Indeed, Theorem~\ref{thm:fpt-rbD} shows that the problem is fixed parameter tractable when parameterized by $r+b+\Delta$, where $\Delta$ is the maximum degree of the graph. One immediate question is whether we can strengthen this result by removing parameter $r$. 
Another intriguing and challenging question is to study $\phi$-\Schelling under the treewidth parameter. In this case, is there an FPT algorithm or is it \W{1}-hard? We conjecture that it is the latter case. As an intermediate step, one could parameterize by the vertex cover of the underlying graph, a much stronger parameter compared to treewidth. It is not too hard to show that when the number of agents equals the number of vertices of the graph the problem is fixed parameter tractable when parameterized by vertex cover. However, the problem remains challenging if the number of agents is less than the number of vertices of the graph. Is it in \FPT when parameterized by vertex cover, or is it \W{1}-hard? We believe that novel algorithmic techniques are required in order to answer the questions above.

\bibliographystyle{named}
\bibliography{bibliography}

\begin{thebibliography}{}

\bibitem[\protect\citeauthoryear{Agarwal \bgroup \em et al.\egroup
  }{2020}]{AEGV20}
Aishwarya Agarwal, Edith Elkind, Jiarui Gan, and Alexandros~A. Voudouris.
\newblock Swap stability in {S}chelling games on graphs.
\newblock In {\em The Thirty-Fourth {AAAI} Conference on Artificial
  Intelligence, {AAAI}}, pages 1758--1765. {AAAI} Press, 2020.

\bibitem[\protect\citeauthoryear{Agarwal \bgroup \em et al.\egroup
  }{2021}]{ElkindGISV19}
Aishwarya Agarwal, Edith Elkind, Jiarui Gan, Ayumi Igarashi, Warut Suksompong,
  and Alexandros~A Voudouris.
\newblock Schelling games on graphs.
\newblock {\em Artificial Intelligence}, 301:103576, 2021.

\bibitem[\protect\citeauthoryear{Barmpalias \bgroup \em et al.\egroup
  }{2015}]{DBLP:journals/corr/BarmpaliasEL15}
George Barmpalias, Richard Elwes, and Andy Lewis{-}Pye.
\newblock From randomness to order: unperturbed {S}chelling segregation in two
  or three dimensions.
\newblock {\em CoRR}, abs/1504.03809, 2015.

\bibitem[\protect\citeauthoryear{Bil{\`{o}} \bgroup \em et al.\egroup
  }{2020}]{DBLP:conf/mfcs/BiloBLM20}
Davide Bil{\`{o}}, Vittorio Bil{\`{o}}, Pascal Lenzner, and Louise Molitor.
\newblock Topological influence and locality in swap {S}chelling games.
\newblock In {\em 45th International Symposium on Mathematical Foundations of
  Computer Science, {MFCS}}, volume 170 of {\em LIPIcs}, pages 15:1--15:15,
  2020.

\bibitem[\protect\citeauthoryear{Bodlaender \bgroup \em et al.\egroup
  }{2016}]{BodlaenderDDFLP16}
Hans~L. Bodlaender, P{\aa}l~Gr{\o}n{\aa}s Drange, Markus~S. Dregi, Fedor~V.
  Fomin, Daniel Lokshtanov, and Michal Pilipczuk.
\newblock A c\({}^{\mbox{k}}\) n 5-approximation algorithm for treewidth.
\newblock {\em {SIAM} J. Comput.}, 45(2):317--378, 2016.

\bibitem[\protect\citeauthoryear{Bui \bgroup \em et al.\egroup
  }{1987}]{bui1987graph}
Thang~Nguyen Bui, Soma Chaudhuri, Frank~Thomson Leighton, and Michael Sipser.
\newblock Graph bisection algorithms with good average case behavior.
\newblock {\em Combinatorica}, 7(2):171--191, 1987.

\bibitem[\protect\citeauthoryear{Bullinger \bgroup \em et al.\egroup
  }{2021}]{BSV21}
Martin Bullinger, Warut Suksompong, and Alexandros~A Voudouris.
\newblock Welfare guarantees in {S}chelling segregation.
\newblock {\em Journal of Artificial Intelligence Research}, 71:143--174, 2021.

\bibitem[\protect\citeauthoryear{Chauhan \bgroup \em et al.\egroup
  }{2018}]{chauhan2018schelling}
Ankit Chauhan, Pascal Lenzner, and Louise Molitor.
\newblock Schelling segregation with strategic agents.
\newblock In {\em International Symposium on Algorithmic Game Theory}, pages
  137--149. Springer, 2018.

\bibitem[\protect\citeauthoryear{Clark and
  Fossett}{2008}]{clark2008understanding}
William~AV Clark and Mark Fossett.
\newblock Understanding the social context of the {S}chelling segregation
  model.
\newblock {\em Proceedings of the National Academy of Sciences},
  105(11):4109--4114, 2008.

\bibitem[\protect\citeauthoryear{Cygan \bgroup \em et al.\egroup
  }{2015}]{CyganFKLMPPS15}
Marek Cygan, Fedor~V. Fomin, Lukasz Kowalik, Daniel Lokshtanov, D{\'{a}}niel
  Marx, Marcin Pilipczuk, Michal Pilipczuk, and Saket Saurabh.
\newblock {\em Parameterized Algorithms}.
\newblock Springer, 2015.

\bibitem[\protect\citeauthoryear{Diestel}{2012}]{Diestel12}
Reinhard Diestel.
\newblock {\em Graph Theory, 4th Edition}, volume 173 of {\em Graduate texts in
  mathematics}.
\newblock Springer, 2012.

\bibitem[\protect\citeauthoryear{Downey and Fellows}{2013}]{DowneyFellows13}
Rodney~G. Downey and Michael~R. Fellows.
\newblock {\em Fundamentals of parameterized complexity}.
\newblock Texts in Computer Science. Springer Verlag, 2013.

\bibitem[\protect\citeauthoryear{Echzell \bgroup \em et al.\egroup
  }{2019}]{echzell2019convergence}
Hagen Echzell, Tobias Friedrich, Pascal Lenzner, Louise Molitor, Marcus Pappik,
  Friedrich Sch{\"o}ne, Fabian Sommer, and David Stangl.
\newblock Convergence and hardness of strategic {S}chelling segregation.
\newblock In {\em International Conference on Web and Internet Economics},
  pages 156--170. Springer, 2019.

\bibitem[\protect\citeauthoryear{Immorlica \bgroup \em et al.\egroup
  }{2017}]{immorlica2017exponential}
Nicole Immorlica, Robert Kleinbergt, Brendan Lucier, and Morteza Zadomighaddam.
\newblock Exponential segregation in a two-dimensional {S}chelling model with
  tolerant individuals.
\newblock In {\em Proceedings of the twenty-eighth annual ACM-SIAM symposium on
  discrete algorithms}, pages 984--993. SIAM, 2017.

\bibitem[\protect\citeauthoryear{Jansen \bgroup \em et al.\egroup
  }{2013}]{JansenKMS13}
Klaus Jansen, Stefan Kratsch, D{\'{a}}niel Marx, and Ildik{\'{o}} Schlotter.
\newblock Bin packing with fixed number of bins revisited.
\newblock {\em J. Comput. Syst. Sci.}, 79(1):39--49, 2013.

\bibitem[\protect\citeauthoryear{Kanellopoulos \bgroup \em et al.\egroup
  }{2020}]{kanellopoulos2020modified}
Panagiotis Kanellopoulos, Maria Kyropoulou, and Alexandros~A Voudouris.
\newblock Modified {S}chelling games.
\newblock In {\em International Symposium on Algorithmic Game Theory}, pages
  241--256. Springer, 2020.

\bibitem[\protect\citeauthoryear{Kanellopoulos \bgroup \em et al.\egroup
  }{2021}]{KKVstrangers}
Panagiotis Kanellopoulos, Maria Kyropoulou, and Alexandros~A. Voudouris.
\newblock Not all strangers are the same: The impact of tolerance in
  {S}chelling games.
\newblock {\em CoRR}, abs/2105.02699, 2021.

\bibitem[\protect\citeauthoryear{Kloks}{1994}]{Kloks94}
T.~Kloks.
\newblock {\em Treewidth: Computations and Approximations}.
\newblock Springer Verlag, Berlin, 1994.

\bibitem[\protect\citeauthoryear{Lin}{2014}]{Lin2014}
Bingkai Lin.
\newblock The parameterized complexity of k-biclique.
\newblock In {\em Proceedings of the twenty-sixth annual ACM-SIAM symposium on
  Discrete algorithms}, pages 605--615. SIAM, 2014.

\bibitem[\protect\citeauthoryear{Niedermeier}{2006}]{Niedermeier06}
Rolf Niedermeier.
\newblock {\em Invitation to Fixed-Parameter Algorithms}.
\newblock Oxford Lecture Series in Mathematics and its Applications. Oxford
  University Press, Oxford, 2006.

\bibitem[\protect\citeauthoryear{Schelling}{1969}]{schelling1969models}
Thomas~C Schelling.
\newblock Models of segregation.
\newblock {\em The American Economic Review}, 59(2):488--493, 1969.

\bibitem[\protect\citeauthoryear{Zhang}{2004a}]{zhang2004dynamic}
Junfu Zhang.
\newblock A dynamic model of residential segregation.
\newblock {\em Journal of Mathematical Sociology}, 28(3):147--170, 2004.

\bibitem[\protect\citeauthoryear{Zhang}{2004b}]{zhang2004residential}
Junfu Zhang.
\newblock Residential segregation in an all-integrationist world.
\newblock {\em Journal of Economic Behavior \& Organization}, 54(4):533--550,
  2004.

\end{thebibliography}
\end{document}